\pdfoutput=1

\documentclass{lipics-v2019}

\pdfoutput=1

\usepackage[dvipsnames]{xcolor}
\usepackage[backgroundcolor=magenta!10!white]{todonotes}

\usepackage{microtype}
\usepackage{amsmath,amssymb,amsfonts}
\usepackage{thm-restate}

\usepackage{graphicx}

\usepackage{ifthen}

\usepackage{numprint}
\npdecimalsign{.}
\npthousandsep{,}

\usepackage{hyperref}
\usepackage{tikz}
\usetikzlibrary{arrows,automata,positioning,calc,intersections,backgrounds}
\usepackage{xcolor}



%






\def\<{\langle}
\def\>{\rangle}


\def\Nat{\mathbb{N}}

\def\Real{\mathbb{R}}


\def\cA{\mathcal{A}}

\def\cD{\mathcal{D}}

\def\cL{\mathcal{L}}
\def\cM{\mathcal{M}}




\def\Pr{\mathrm{Pr}}









\newcommand{\CoPath}{\mathit{CoPath}}
\newcommand{\Co}{\mathit{Co}}
\newcommand{\Survives}{\mathit{Survives}}
\newcommand{\unvisited}{\mathit{worklist}}



\colorlet{davidColor}{YellowGreen!30!white}
\colorlet{stefanColor}{Brown!30!white}
\colorlet{benColor}{Yellow!30!white}




\sloppy

\newcommand{\then}{\mathop{\triangleright}}

\newcommand{\cvec}[1]{\pmb{\left[\vphantom{#1}\right.} #1 \pmb{\left.\vphantom{#1}\right]}}

\title
{Efficient Analysis of Unambiguous Automata Using Matrix Semigroup Techniques}

\author{Stefan Kiefer}{University of Oxford, UK}{}{}{Work supported by a Royal Society University Research Fellowship.}
\author{Cas Widdershoven}{University of Oxford, UK}{}{}{}

\authorrunning{Stefan Kiefer and Cas Widdershoven} 

\Copyright{Stefan Kiefer and Cas Widdershoven}

\ccsdesc[500]{Theory of computation~Automata over infinite objects}
\ccsdesc[500]{Theory of computation~Design and analysis of algorithms}


\keywords{Algorithms, Automata, Markov Chains, Matrix Semigroups}

\category{}

\relatedversion{}

\supplement{}



\nolinenumbers 


\begin{document}

\maketitle              

\begin{abstract}
We introduce a novel technique to analyse unambiguous B\"uchi automata quantitatively, and apply this to the model checking problem.
It is based on linear-algebra arguments that originate from the analysis of matrix semigroups with constant spectral radius.
This method can replace a combinatorial procedure that dominates the computational complexity of the existing procedure by Baier et al.
We analyse the complexity in detail, showing that, in terms of the set $Q$ of states of the automaton, the new algorithm runs in time $O(|Q|^4)$, improving on an efficient implementation of the combinatorial algorithm by a factor of $|Q|$.
\end{abstract}

\newpage

\section{Introduction}

Given a finite automaton~$\cA$, what is the proportion of words accepted by it?
This question is natural but imprecise: there are infinitely many words and the proportion of accepted words may depend on the word length.
One may consider the sequence $d_0, d_1, \ldots$ where $d_i$ is the proportion of length-$i$ words accepted by~$\cA$, i.e., $d_i = \frac{|L(\cA) \cap \Sigma^i|}{|\Sigma|^i}$.
The sequence does not necessarily converge, but one may study, e.g., possible limits and accumulation points~\cite{Bodirsky04}.

Alternatively, one can specify a probability distribution on words, e.g., with a Markov chain, and ask for the probability that a word is accepted by~$\cA$.
For instance, if $\Sigma = \{a,b\}$, one may generate a random word, letter by letter, by outputting $a$, $b$ with probability~$1/3$ each, and ending the word with probability~$1/3$.
For an NFA~$\cA$, determining whether the probability of generating an accepted word is~$1$ is equivalent to universality (is $L(\cA) = \Sigma^*$?), a PSPACE-complete problem.
However, if $\cA$ is unambiguous, i.e., every accepted word has exactly one accepting path, then one can compute the probability of generating an accepted word in polynomial time by solving a linear system of equations.
Unambiguousness allows us to express the probability of a union as the sum of probabilities:

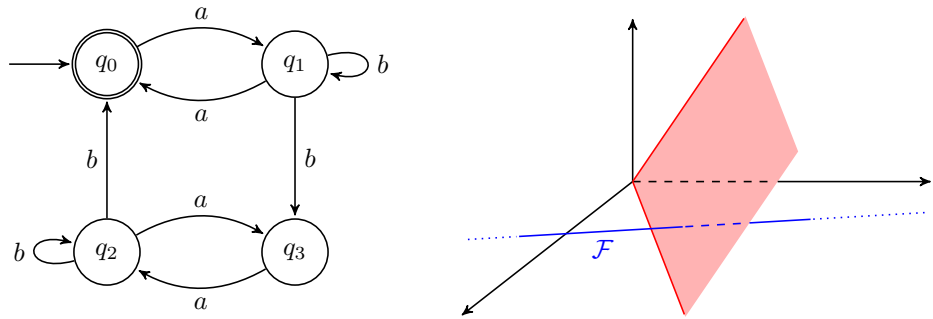
\begin{figure}
\begin{center}
\begin{tikzpicture}[->,>=stealth',shorten >=1pt,auto,node distance=2.5cm,
                    semithick]
  \tikzstyle{every state}=[draw=black,text=black]

  \node[accepting,state] (A)                    {$q_0$};
  \node[state]         (B)       [right of=A] {$q_1$};
  \node[state]         (C)       [below of=A] {$q_2$};
  \node[state]         (D)       [below of=B] {$q_3$};

 \draw[->] (-1.3,0) -- (A);
 \path (A) edge [bend left] node {$a$} (B);
 \path (B) edge [loop right] node {$b$} (B);
 \path (B) edge node {$b$} (D);
 \path (B) edge [bend left] node {$a$} (A);
 \path (C) edge [loop left] node {$b$} (C);
 \path (C) edge node {$b$} (A);
 \path (C) edge [bend left] node {$a$} (D);
 \path (D) edge [bend left] node {$a$} (C);

\end{tikzpicture}
\qquad
\begin{tikzpicture}[>=stealth',shorten >=1pt,auto,semithick]

\coordinate (o) at (0,0);
\coordinate (x) at (4,0);
\coordinate (y) at (0,2.2);
\coordinate (z) at (-2.3,-1.8);
\coordinate (fy)at (0.7,-1.8);
\coordinate (fz)at (1.5,2.2);
\coordinate (f) at ($(fy)+(fz)$);
\coordinate (l3) at (0.7,-0.6);
\coordinate (l5) at (2.4,-0.5);
\coordinate (l6) at ($(l3)!1.9!(l5)$);
\coordinate (l2) at ($(l3)!-1.3!(l5)$);
\coordinate (l1) at ($(l3)!-1.7!(l5)$);

\path [name path=l3l5] (l3) -- (l5);
\fill[red!30] (o) -- (fz) -- (f) -- (fy) -- cycle;
\path [name path=ox] (o) -- (x);
\path [name path=fyf] (fy) -- (f);
\path [name intersections={of=ox and fyf,by={xh}}];
\path [name intersections={of=l3l5 and fyf,by={l4}}];
\draw[dashed]     (o) -- (xh);
\draw[->] (xh) -- (x);
\draw[->] (o) -- (y);
\draw[->] (o) -- (z);
\draw[red](o) -- ($(o)!1!(fy)$);
\draw[red](o) -- (fz);
\draw[blue,dotted] (l1) -- (l2);
\draw[blue]        (l2) -- (l3);
\draw[blue,dashed] (l3) -- (l4);
\draw[blue]        (l4) -- (l5);
\draw[blue,dotted] (l5) -- (l6);
\node[blue] at (-0.4,-0.9) {$\mathcal{F}$};
\end{tikzpicture}
\end{center}
\caption{Left: unambiguous automaton~$\cA$. Right: visualisation of the affine space~$\mathcal{F}$ (blue) and the vector space spanned by (pseudo-)cuts (red); these spaces are orthogonal.}
\label{fig:automaton}
\end{figure}
\begin{example} \label{ex-intro-finite}
Consider the unambiguous automaton~$\cA$ in Figure~\ref{fig:automaton} (left).
If we generate a random word over~$\{a,b\}$ according to the process described above,
we have the following linear system for the vector $\vec{z}$ where $\vec{z}_q$ is, for each $q \in \{q_0, q_1, q_2, q_3\}$, the probability that the word is accepted when $q$ is taken as initial state:
\begin{align*}
 \vec{z}_{q_0} & \ = \ \textstyle\frac13 \vec{z}_{q_1} + \frac13
 & \vec{z}_{q_1}  & \ = \ \textstyle\frac13 \vec{z}_{q_0} + \frac13 (\vec{z}_{q_1} + \vec{z}_{q_3}) \\
 \vec{z}_{q_2} & \ = \ \textstyle\frac13 \vec{z}_{q_3} + \frac13 (\vec{z}_{q_0} + \vec{z}_{q_2})
 & \vec{z}_{q_3} & \ = \ \textstyle\frac13 \vec{z}_{q_2}
\end{align*}
The constant term in the equation for $\vec{z}_{q_0}$ reflects the fact that $q_0$ is accepting.
The (other) coefficients $\frac13$ correspond to the production of either $a$ or~$b$.
The linear system has a unique solution.
\end{example}
One may view an NFA $\cA$ as a B\"uchi automaton, so that its language $L(\cA) \subseteq \Sigma^\omega$ is the set of those infinite words that have an accepting run in~$\cA$, i.e., a run that visits accepting states infinitely often.
There is a natural notion of an infinite random word over~$\Sigma$: in each step sample a letter from~$\Sigma$ uniformly at random, e.g., if $\Sigma=\{a,b\}$ then choose $a$ and~$b$ with probability~$1/2$ each.
Perhaps more significantly, model checking Markov chains against B\"uchi automata, i.e., computing the probability that the random word generated by the Markov chain is accepted by the automaton, is a key problem in the verification of probabilistic systems.
Unfortunately, like the aforementioned problem on finite words, it is also PSPACE-complete~\cite{CY95}.
However, if the B\"uchi automaton is unambiguous, i.e., every accepted (infinite) word has exactly one accepting path, then one can compute the probability of generating an accepted word in polynomial time~\cite{16BKKKMW-CAV}, both in the given B\"uchi automaton and in a given (discrete-time, finite-state) Markov chain.
Since LTL specifications can be converted to unambiguous B\"uchi automata with a single-exponential blow-up, this leads to an LTL model-checking algorithm with single-exponential runtime, which is optimal.
The polynomial-time algorithm from~\cite{16BKKKMW-CAV} for unambiguous B\"uchi automata is more involved than in the finite-word case.
\begin{example} \label{ex-intro-infinite}
In the following we view the automaton~$\cA$ from Figure~\ref{fig:automaton} as an (unambiguous) B\"uchi automaton.
If we generate a random word over~$\{a,b\}$ according to the process described above,
then the vector $\vec{z}$ where for each $q \in \{q_0, q_1, q_2, q_3\}$, $\vec{z}_q$ is the probability that the word is accepted when $q$ is taken as initial state, is a solution to the following linear system:
\begin{align*}
 \vec{z}_{q_0} & \ = \ \textstyle\frac12 \vec{z}_{q_1}
 & \vec{z}_{q_1}  & \ = \ \textstyle\frac12 \vec{z}_{q_0} + \frac12 (\vec{z}_{q_1} + \vec{z}_{q_3}) \\
 \vec{z}_{q_2} & \ = \ \textstyle\frac12 \vec{z}_{q_3} + \frac12 (\vec{z}_{q_0} + \vec{z}_{q_2})
 & \vec{z}_{q_3} & \ = \ \textstyle\frac12 \vec{z}_{q_2}
\end{align*}
However, this linear system has multiple solutions: indeed, any scalar multiple $(1, 2, 2, 1)^\top$ is a solution.
\end{example}
In order to make such a linear system uniquely solvable, one needs to add further equations, and finding these further equations is where the real challenge lies.
Assuming that the state space~$Q$ of~$\cA$ is strongly connected and the Markov chain generates letters uniformly at random as described above, a single additional equation $\vec{\mu}^\top \vec{z} = 1$ suffices (this can be shown with Perron-Frobenius theory: the eigenspace for the dominant eigenvalue of a nonnegative irreducible matrix is one-dimensional).
We call such a vector $\vec{\mu} \in \mathbb{R}^Q$ a \emph{normaliser}.
The aim of this paper is to use a novel, linear-algebra based technique to compute normalisers more efficiently.

The suggestion in~\cite{16BKKKMW-CAV} was to take as normaliser the characteristic vector $\cvec{c} \in \{0,1\}^Q$ of a so-called \emph{cut} $c \subseteq Q$.
To define this, let us write $\delta(q,w)$ for the set of states reachable from a state~$q \in Q$ via the word~$w \in \Sigma^*$.
A \emph{cut} is a set of states of the form $c = \delta(q,w)$ such that $\delta(q,w x) \ne \emptyset$ holds for all $x \in \Sigma^*$.
If a cut does not exist or if $\cA$ does not have accepting states, then we have $\vec{z} = \vec{0}$.
\begin{example} \label{ex-intro-cut}
In the automaton~$\cA$ from Figure~\ref{fig:automaton}, we have a cut $c = \delta(q_0, a b a) = \{q_0, q_2\}$.
Hence its characteristic vector $\vec{\mu} = (1,0,1,0)^\top$ is a normaliser, allowing us to add the equation $\vec{\mu}^\top \vec{z} = \vec{z}_{q_0} + \vec{z}_{q_2} = 1$.
Now the system is uniquely solvable: $\vec{z} = \frac13 (1,2,2,1)^\top$.
The equation $\vec{z}_{q_0} + \vec{z}_{q_2} = 1$ is valid by an ergodicity argument: intuitively, given a finite word that leads to $q_0$ \emph{and}~$q_2$, a random infinite continuation will almost surely enable an accepting run.
For instance, $\vec{z}_{q_0} = \frac13$ is the probability that a random infinite word over $\{a,b\}$ has an odd number of $a$s before the first $b$. (This holds despite the fact that the word $a b b b \ldots$ is not accepted from~$q_0$.)
\end{example}
In Proposition~\ref{prop:findcut} we show that an efficient implementation of the algorithm from~\cite{16BKKKMW-CAV} for computing a cut runs in time $O(|Q|^5)$. 
Our goal is to find a normaliser $\vec{\mu}$ more efficiently.

The general idea is to move from a combinatorial problem, namely computing a set $c \subseteq Q$, to a continuous problem, namely computing a vector $\vec{\mu} \in \mathbb{R}^Q$.
To illustrate this, note that since we can choose as~$\vec{\mu}$ the characteristic vector of an arbitrary cut, we may also choose a convex combination of such vectors, leading to a normaliser~$\vec{\mu}$ with entries other than $0$ or~$1$.

The technical key ideas of this paper draw on the observation that for unambiguous automata with cuts, the transition matrices generate a semigroup of matrices whose spectral radii are all~$1$.
(The spectral radius of a matrix is the largest absolute value of its eigenvalues.)
This observation enables us to adopt techniques that have recently been devised by Protasov and Voynov~\cite{Protasov17} for the analysis of matrix semigroups with constant spectral radius.
To the best of the authors' knowledge, such semigroups have not previously been connected to unambiguous automata.
This transfer is the main contribution of this paper.

To sketch the gist of this technique, for any $a \in \Sigma$ write $M(a) \in \{0,1\}^{Q \times Q}$ for the transition matrix of the unambiguous automaton~$\cA$, define the average matrix $\overline{M} = \frac{1}{|\Sigma|} \sum_{a \in \Sigma} M(a)$, and let $\vec{y} = \overline{M} \vec{y} \in \mathbb{R}^Q$ be an eigenvector with eigenvalue~$1$ (the matrix~$\overline{M}$ has such an eigenvector if $\cA$ has a cut).
Since the matrix semigroup, $\mathcal{S} \subseteq \{0,1\}^{Q \times Q}$, generated by the transition matrices $M(a)$ has constant spectral radius, it follows from~\cite{Protasov17} that one can efficiently compute an affine space $\mathcal{F} \subseteq \mathbb{R}^Q$ with $\vec{y} \in \mathcal{F}$ and $\vec{0} \not\in \mathcal{F}$ such that for any $\vec{v} \in \mathcal{F}$ and any $M \in \mathcal{S}$ we have $M \vec{v} \in \mathcal{F}$.
Using the fact that $\delta(q,w x)$ is a cut (for all $x \in \Sigma^*$) whenever $\delta(q,w)$ is a cut, one can show that all characteristic vectors of cuts have the same scalar product with all $\vec{v} \in \mathcal{F}$, i.e., all characteristic vectors of cuts are in the vector space orthogonal to~$\mathcal{F}$.
Indeed, we choose as normaliser $\vec{\mu}$ a vector that is orthogonal to~$\mathcal{F}$.
This linear-algebra computation can be carried out in time $O(|Q|^3)$.
In the visualisation on the right of Figure~\ref{fig:automaton}, the characteristic vectors of cuts lie in the plane shaded in red, which is orthogonal to straight line~$\mathcal{F}$ (blue).
\begin{example} \label{ex-affine-intro}
In the automaton~$\cA$ from Figure~\ref{fig:automaton}, the vector $\vec{y} = (1,2,2,1)^\top$ satisfies $\overline{M} \vec{y} = \vec{y}$ where $\overline{M} = \frac12 (M(a) + M(b))$.
The affine space $\mathcal{F} := \{\vec{y} + s (1,-1,-1,1)^\top \mid s \in \mathbb{R}\}$ has the mentioned closure properties, i.e., $M(a) \mathcal{F} \subseteq \mathcal{F}$ and $M(b) \mathcal{F} \subseteq \mathcal{F}$.
Note that the vector $\vec{\mu}$ from Example~\ref{ex-intro-cut} is indeed orthogonal to~$\mathcal{F}$, i.e., $\vec{\mu}^\top (1,-1,-1,1)^\top = 0$.
\end{example}
However, to ensure that $\vec{\mu}$ is a valid normaliser, we need to restrict it further.
To this end, we compute, for some state $q \in Q$, the set $\Co(q) \subseteq Q$ of \emph{co-reachable} states, i.e., states $r \in Q$ such that $\delta(q,w) \supseteq \{q,r\}$ holds for some $w \in \Sigma^*$.
This requires a combinatorial algorithm, which is similar to a straightforward algorithm that would verify the unambiguousness of~$\cA$.
Its runtime is quadratic in the number of transitions of~$\cA$, i.e., $O(|Q|^4)$ in the worst case.
Then we restrict~$\vec{\mu}$ such that $\vec{\mu}_q = 1$ and $\mu$ is non-zero only in entries that correspond to~$\Co(q)$.
In the visualisation on the right of Figure~\ref{fig:automaton}, restricting some components of~$\vec{\mu}$ to be~$0$ corresponds to the vectors in the shaded (red) plane that lie on the plane described by $q' = 0$ for all $q' \in Q \setminus Co(q)$. 
\begin{example} \label{ex-Cod-intro}
We have $\Co(q_0) = \{q_0, q_2\}$.
So we restrict $\vec{\mu}$ to be of the form $(1,0,x,0)^\top$.
Together with the equation $\vec{\mu}^\top (1,-1,-1,1)^\top = 0$ this implies $\vec{\mu} = (1,0,1,0)^\top$.
The point is that, although this is the same vector computed via a cut in Example~\ref{ex-intro-cut}, the linear-algebra based computation of~$\vec{\mu}$ is more efficient.
\end{example}
In the rest of the paper we analyse the general case of model checking a given Markov chain against a given unambiguous B\"uchi automaton.
The efficiency gain we aim for with our technique can only be with respect to the automaton, not the Markov chain; nevertheless, we analyse in detail the runtime in terms of the numbers of states and transitions in both the automaton and the Markov chain.
The main results are developed in Section~\ref{sec:uba}.
In Section~\ref{sub:linear-system} we describe the general approach from \cite{16BKKKMW-CAV,baieunp1}.
In Section~\ref{sub:cuts} we analyse the runtime of an efficient implementation of the algorithm from~\cite{16BKKKMW-CAV,baieunp1} for computing a cut.
Our main contribution lies in Section~\ref{sub:pseudo-cuts}, where we develop a new approach for computing a normaliser, based on the mentioned spectral properties of the transition matrices in unambiguous automata.
We close in Section~\ref{sec-discussion} with a discussion. 
The full version of this paper~\cite{kief2019} contains an appendix with proofs.

\section{Preliminaries}
\label{sec:prelim}

We assume the reader to be familiar with basic notions of finite automata over infinite words and Markov
chains, see, e.g., \cite{GraedelThomasWilke02,Kulkarni}.
In the following we provide a brief summary of our notation and a few facts related to linear algebra.

%

\subparagraph*{Finite automata.}
A \emph{B\"uchi automaton}
is a tuple $\cA = (Q,\Sigma,\delta,Q_0,F)$
where $Q$ is the finite set of states, $Q_0 \subseteq Q$ is the set of initial
states, $\Sigma$ is the finite alphabet,
$\delta: Q \times \Sigma \to 2^Q$ is
the transition function, and $F \subseteq Q$ is the set of accepting states.
We extend the transition function to $\delta: Q \times \Sigma^* \to 2^Q$
and to $\delta: 2^Q \times \Sigma^* \to 2^Q$ in the standard way.
For $q \in Q$ we write $\cA_q$ for the automaton obtained from~$\cA$ by making $q$ the only initial state.

Given states $q,r\in Q$ and a finite word
$w = a_0 a_1 \cdots a_{n-1} \in \Sigma^*$,
a \emph{run} for $w$ from $q$ to $r$ is a sequence
$q_0 q_1 \cdots q_n \in Q^{n+1}$ with $q_0=q$, $q_n=r$ and
$q_{i+1} \in \delta(q_i,a_i)$ for $i \in \{0, \ldots, n-1\}$.
A \emph{run} in $\cA$ for an infinite word
$w = a_0 a_1 a_2 \cdots \in \Sigma^{\omega}$ is an infinite sequence
$\rho = q_0 q_1 \cdots \in Q^{\omega}$ such that $q_0 \in Q_0$ and
$q_{i+1} \in \delta(q_i,a_i)$ for all $i \in \Nat$.  Run $\rho$ is
called \emph{accepting} if $\inf(\rho) \cap F \ne \emptyset$ where
$\inf(\rho) \subseteq Q$ is the set of states that occur infinitely
often in~$\rho$.  The \emph{language} $\cL(\cA)$ of accepted words
consists of all infinite words $w\in \Sigma^{\omega}$ that have at
least one accepting run.
%
%
%
%
$\cA$~is called  
\emph{unambiguous} if each word $w\in \Sigma^{\omega}$ has at most one accepting run.
We use the acronym UBA for unambiguous B\"uchi automaton.


We define $|\delta| := |\{(q,r) \mid \exists\, a \in \Sigma : r \in \delta(q,a)\}|$, i.e., $|\delta| \le |Q|^2$ is the number of transitions in~$\cA$ when allowing for multiple labels per transition.
In Appendix~\ref{app-A} we give an example that shows that the number of transitions can be quadratic in~$|Q|$, even for UBAs with a strongly connected state space. 
We assume $|Q| \le |\delta|$, as states without outgoing transitions can be removed.
In this paper, $\Sigma$ may be a large set (of states in a Markov chain), so it is imperative to allow for multiple labels per transition.
We use a lookup table to check in constant time whether $r \in \delta(q,a)$ holds for given $r, q, a$.

A \emph{diamond} is given by two states $q, r \in Q$ and a finite word~$w$ such that there exist at least two distinct runs for~$w$ from $q$ to~$r$.
One can remove diamonds (see Appendix~\ref{app:prelims}):
\begin{restatable}{ourlemma}{lemdiamondfree}\label{lem:diamondfree}
Given a UBA, one can compute in time $O(|\delta|^2|\Sigma|)$ a UBA of at most the same size, with the same language and without diamonds.
\end{restatable}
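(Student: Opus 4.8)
The plan is to obtain the diamond-free automaton by \emph{trimming} $\cA$ to its ``useful'' part. Let $S \subseteq Q$ be the set of states $q$ that are both reachable from an initial state (i.e., $q \in \delta(Q_0,v)$ for some $v \in \Sigma^*$) and \emph{live} in the sense that $\cL(\cA_q) \ne \emptyset$. Let $\cA'$ be the sub-automaton induced by $S$: its states are $S$, its initial states $Q_0 \cap S$, its accepting states $F \cap S$, and its transition function maps $q \in S$ to $\delta(q,a) \cap S$. Clearly $\cA'$ is no larger than $\cA$. It is a UBA because every run of $\cA'$ is a run of $\cA$, so $\cA'$ cannot have two accepting runs on a word when $\cA$ has at most one. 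And $\cL(\cA') = \cL(\cA)$: the inclusion ``$\subseteq$'' is trivial; for ``$\supseteq$'', if $\rho = q_0 q_1 \cdots$ is an accepting run of $\cA$ on a word $u$, then each $q_i$ is reachable from $Q_0$, and the suffix $q_i q_{i+1}\cdots$ is an accepting run of $\cA_{q_i}$ (since $\inf(\rho) \cap F \ne \emptyset$ already), so $q_i \in S$; hence $\rho$ is a run of $\cA'$, and since the $F$-states it visits infinitely often lie in $S$, it is accepting in $\cA'$.

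It remains to argue that $\cA'$ has no diamonds, which is where the only real idea is needed. Suppose towards a contradiction that there are states $q,r \in S$, a word $w$, and two distinct runs $\rho_1 \ne \rho_2$ for $w$ from $q$ to $r$ (these are in particular runs of $\cA$). Because $q \in S$, fix a run $\sigma$ of $\cA$ from some $q_0 \in Q_0$ to $q$, say on the word $v$; because $r \in S$, fix an accepting run $\tau$ of $\cA$ from $r$, say on the infinite word $z$. Gluing at the shared endpoints $q$ and $r$, the sequences $\sigma\rho_1\tau$ and $\sigma\rho_2\tau$ are runs of $\cA$ on the infinite word $vwz$, both starting in $Q_0$. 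They are accepting because their tail coincides with $\tau$, so their set of infinitely-visited states is $\inf(\tau)$, which meets $F$; and they are distinct because $\rho_1$ and $\rho_2$, having the same first and last states, must differ at some intermediate position. This yields two accepting runs of $\cA$ on $vwz$, contradicting unambiguity of $\cA$.

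Finally, the running time. The set $S$ is computed by two standard graph searches on the transition graph of $\cA$ — the directed graph on $Q$ with an edge $q \to r$ whenever $r \in \delta(q,a)$ for some $a \in \Sigma$, which has at most $|\delta|$ edges: forward reachability from $Q_0$ by breadth-first search, and liveness ($\cL(\cA_q) \ne \emptyset$) by computing the SCCs, marking those that contain both an accepting state and a cycle, and doing backward reachability from the marked SCCs. Constructing this graph and extracting $\cA'$ each cost $O(|\delta|\,|\Sigma|)$, and the searches cost $O(|Q| + |\delta|)$, well within the claimed bound $O(|\delta|^2|\Sigma|)$. The main (mild) obstacle is just to get the run-gluing right: to notice that the endpoints $q$ and $r$ of a diamond are precisely the places where a run reaching $q$ and an accepting run leaving $r$ can be attached, turning one diamond in the trimmed automaton into two accepting runs of the original UBA.
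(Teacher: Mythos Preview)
Your proof is correct and takes a genuinely different route from the paper's. The paper first removes unreachable states, then constructs the product automaton $\cA \times \cA$ (with a fresh start vertex adjacent to all diagonal initial pairs) and does a breadth-first search on it: whenever the search encounters an edge from an off-diagonal pair $(s,s')$ with $s \ne s'$ into a diagonal pair $(q,q)$, it deletes $q$ from~$\cA$, arguing that $q$ is then the endpoint of a diamond whose source is reachable, so unambiguity forces $\cL(\cA_q)=\emptyset$. The bound $O(|\delta|^2|\Sigma|)$ in the lemma is exactly the size of this product.

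You instead trim directly to the states that are both reachable \emph{and} live, and then show by the run-gluing contradiction that no diamond can survive in the trimmed automaton. This is conceptually simpler --- no product construction is needed --- and actually faster: your procedure runs in $O(|\delta|\,|\Sigma|)$ rather than $O(|\delta|^2|\Sigma|)$. The only thing the paper's approach buys is that it may remove fewer states (just diamond endpoints rather than all dead states), but since the lemma only asks for ``at most the same size'' this is immaterial. Your observation that diamond-freeness is an automatic consequence of reachability plus liveness in a UBA is a clean sharpening of the argument.
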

%
\noindent For the rest of the paper, we assume that UBAs do not have diamonds.

%
%

\subparagraph*{Vectors and matrices.}
We consider vectors and square matrices indexed by a finite set~$S$.
We write (column) vectors $\vec{v} \in \Real^S$ with arrows on top,
and $\vec{v}^\top$ for the transpose (a row vector)
of~$\vec{v}$.  The zero vector and the all-ones vector are denoted
by~$\vec{0}$ and $\vec{1}$, respectively.
For a set $T \subseteq S$ we write $\cvec{T} \in \{0,1\}^S$ for the characteristic vector of~$T$, i.e., $\cvec{T}_s = 1$ if $s \in T$ and $\cvec{T}_s = 0$ otherwise.
A matrix
$M \in [0,1]^{S \times S}$ is called \emph{stochastic} if
$M \vec{1} = \vec{1}$, i.e., if every row of $M$ sums to one.  For a
set $U \subseteq S$ we write $\vec{v}_U \in \Real^U$ for the
restriction of~$\vec{v}$ to~$U$.  Similarly, for $T, U \subseteq S$ we
write $M_{T,U}$ for the submatrix of~$M$ obtained by deleting the rows
not indexed by~$T$ and the columns not indexed by~$U$.  The (directed)
\emph{graph} of a nonnegative matrix $M \in \Real^{S \times S}$ has
vertices $s \in S$ and edges $(s,t)$ if $M_{s,t} > 0$.  We may
implicitly associate~$M$ with its graph and speak about
graph-theoretic concepts such as reachability and strongly connected
components (SCCs) in~$M$.




\subparagraph*{Markov chains.}
A (finite-state discrete-time) \emph{Markov chain} is a pair
$\cM = (S,M)$ where $S$ is the finite set of states, and
$M \in [0,1]^{S \times S}$ is a stochastic matrix that specifies
transition probabilities.  An \emph{initial distribution} is a
function $\iota : S \to [0,1]$ satisfying
$\sum_{s \in S} \iota(s) = 1$.  Such a distribution induces a
probability measure~$\Pr^{\cM}_\iota$ on the measurable subsets
of~$S^\omega$ in the standard way, see for instance~\cite[chapter 10.1, page 758]{baie2008}.
If $\iota$ is concentrated on a
single state~$s$, we may write $\Pr^{\cM}_s$
for~$\Pr^{\cM}_\iota$.
We write~$E$ for the set of edges in the graph of~$M$.
Note that $|S| \le |E| \le |S|^2$, as $M$ is stochastic.

\subparagraph*{Solving linear systems.}
Let $\kappa \in [2,3]$ be such that one can multiply two $n \times n$-matrices in time $O(n^\kappa)$ (in other literature, $\kappa$ is often denoted by~$\omega$).
We assume that arithmetic operations cost constant time.
One can choose $\kappa = 2.4$, see \cite{gall2014} for a recent result.
One can check whether an $n \times n$ matrix is invertible in time $O(n^\kappa)$~\cite{bunc1974}.
Finally, one can solve a linear system with $n$ equations using the Moore-Penrose pseudo-inverse~\cite{jame1978} in time $O(n^\kappa)$~\cite{petk2009}.

\subparagraph*{Spectral theory.}
The \emph{spectral radius} of a matrix $M \in \Real^{S \times S}$,
denoted $\rho(M)$, is the largest absolute value of the eigenvalues
of~$M$.
By the Perron-Frobenius theorem~\cite[Theorems 2.1.1, 2.1.4]{book:bermanP94}, if $M$ is nonnegative then the spectral radius $\rho(M)$ is an eigenvalue of $M$ and there is a nonnegative eigenvector $\vec{x}$ with $M\vec{x}=\rho(M)\vec{x}$.
Such a vector~$\vec{x}$ is called \emph{dominant}.
Further, if $M$ is nonnegative and strongly connected then $\vec{x}$ is strictly positive in all components and the eigenspace associated with $\rho(M)$ is one-dimensional.

\section{Algorithms}
\label{sec:uba}

Given a Markov chain~$\cM$, an initial distribution~$\iota$, and a B\"uchi
automaton~$\cA$ whose alphabet is the state space of~$\cM$, the
\emph{probabilistic model-checking problem} is to compute
$\Pr^{\cM}_\iota(\cL(\cA))$.
This problem is PSPACE-complete~\cite{CY95,BusRubVar04}, but solvable in polynomial time if $\cA$ is deterministic.
For UBAs a polynomial-time algorithm was described in~\cite{16BKKKMW-CAV,baieunp1}.
In this paper we obtain a faster algorithm (recall that $E$ is the set of transitions in the Markov chain):
\begin{restatable}{theorem}{thmPMCMCUBA}
  \label{thm:PMC-MC-UBA}
  Given a Markov chain $\cM = (S,M)$, an initial distribution~$\iota$, and
  a UBA~$\cA = (Q,S,\delta,Q_0,F)$,
  one can compute $\Pr^{\cM}_\iota(\cL(\cA))$ in time
  $O(|Q|^\kappa |S|^\kappa + |Q|^3|E|+|\delta|^2|E|)$.
\end{restatable}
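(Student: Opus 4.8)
The plan is to reduce the model-checking problem to solving a linear system $\vec{z} = A\vec{z} + \vec{b}$ that is made uniquely solvable by adding normaliser equations, following \cite{16BKKKMW-CAV,baieunp1}, but to compute the normalisers via the spectral-radius technique sketched in the introduction rather than via cuts. First I would preprocess: by \Cref{lem:diamondfree} we may assume $\cA$ is diamond-free, at cost $O(|\delta|^2|\Sigma|) = O(|\delta|^2|S|) \le O(|\delta|^2|E|)$ since $|S|\le|E|$. Then I would form the product construction of $\cM$ and $\cA$ and decompose it into SCCs, identifying the ``accepting'' SCCs (those in which the recurrent behaviour of $\cM$ almost surely produces an accepting run). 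This is the part of the algorithm that genuinely must touch both $\cM$ and $\cA$; a careful implementation costs $O(|\delta||E|)$ or so for the graph work, which is absorbed by the stated bound.

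Next I would set up the linear system for $\vec{z}$, whose unknowns are indexed by pairs (state of $\cM$, state of $\cA$). As in Example~\ref{ex-intro-infinite}, this system has a nontrivial kernel, so for each strongly connected ``accepting'' component I need a normaliser $\vec{\mu}$ with $\vec{\mu}^\top\vec{z}=1$. Here I invoke the new construction of Section~\ref{sub:pseudo-cuts}: restricted to the automaton structure over the relevant part, I compute an eigenvector $\vec{y}$ of the average transition matrix $\overline{M}$ with eigenvalue $1$, then the affine space $\mathcal{F}\ni\vec{y}$, $\vec{0}\notin\mathcal{F}$, invariant under all $M(a)$, using the Protasov--Voynov machinery \cite{Protasov17}; this is pure linear algebra on $|Q|\times|Q|$ matrices and costs $O(|Q|^3)$ per component, $O(|Q|^3|E|)$ over all of them in the worst case (one component per Markov-chain edge). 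I then compute the co-reachable sets $\Co(q)$ by a combinatorial search whose cost is quadratic in the number of transitions, i.e.\ $O(|\delta|^2)$ per relevant state, contributing the $O(|\delta|^2|E|)$ term, and intersect the orthogonal-complement constraint with the support constraint from $\Co(q)$ to pin down $\vec{\mu}$.

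Finally, with all normaliser equations added, the linear system is uniquely solvable; it has $O(|Q||S|)$ unknowns, so by the pseudo-inverse bound it is solved in time $O((|Q||S|)^\kappa) = O(|Q|^\kappa|S|^\kappa)$. Correctness of the normalisers — that $\vec{\mu}^\top\vec{z}=1$ really holds and that the resulting system has a unique solution equal to the true acceptance probabilities — is inherited from the ergodicity arguments of \cite{16BKKKMW-CAV} together with the identity, proved in Section~\ref{sub:pseudo-cuts}, that every characteristic vector of a cut lies in $\mathcal{F}^\perp$ and is non-zero exactly on $\Co(q)$-coordinates; hence our $\vec{\mu}$ agrees with a convex combination of cut-characteristic vectors and is a valid normaliser. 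Summing the four contributions gives the claimed $O(|Q|^\kappa|S|^\kappa + |Q|^3|E| + |\delta|^2|E|)$ bound. The main obstacle is the bookkeeping for the general Markov-chain case: the clean picture of the introduction (single strongly connected UBA, uniform letter distribution) must be lifted to a product automaton with non-uniform, state-dependent letter probabilities, which requires checking that the spectral-radius-$1$ property of the transition-matrix semigroup and the orthogonality of cuts to $\mathcal{F}$ survive this lifting, and that the per-component costs aggregate to the stated bound rather than something larger.
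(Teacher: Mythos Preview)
Your overall architecture matches the paper: preprocess, build the product~$B$, find recurrent SCCs, compute a normaliser per accepting recurrent SCC via the linear-algebra route of Section~\ref{sub:pseudo-cuts}, solve the resulting system. But two concrete points in your proposal do not survive scrutiny.

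\textbf{Correctness of the normaliser.} Your justification that ``our $\vec{\mu}$ agrees with a convex combination of cut-characteristic vectors'' is not what the paper proves, and in fact need not hold: Example~\ref{ex:cutvects} exhibits a pseudo-cut that is \emph{not} a combination of cut vectors. The actual argument is Lemma~\ref{lem:colocal-pseudo-cut}: if $\vec{\mu}$ is a $\Co(d)$-pseudo-cut and $w$ is a word with $d \then w$ a cut containing~$d$, then for every $d' \in \Co(d)\setminus\{d\}$ one shows $\cvec{d'}^\top\Delta(w)\vec{z}=0$ (by a containment-of-cuts argument), whence $\vec{\mu}^\top\vec{z}=\vec{\mu}^\top\Delta(w)\vec{z}=\vec{\mu}_d\cvec{d}^\top\Delta(w)\vec{z}=\vec{\mu}_d$. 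That $\vec{\mu}$ is a pseudo-cut at all is established via Lemma~\ref{lem:affine}, which shows the basis $R(s)$ affinely spans the relevant space; orthogonality to~$\mathcal{F}$ alone does not give you a combination of cut vectors.

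\textbf{Complexity accounting.} The linear-algebra step is \emph{not} ``on $|Q|\times|Q|$ matrices, $O(|Q|^3)$ per component, with one component per Markov-chain edge.'' The computation of Lemma~\ref{lem:calcR} takes place in $\Real^D$ for each accepting recurrent SCC $D\subseteq Q\times S$; the vectors are fibres over individual Markov states~$t$, so each Gram--Schmidt/matrix-vector step costs $O(|Q|^2)$, but the number of such steps is bounded by $\sum_{t\in S}|Q_{D,t}|\,|E(t)|$. The $|Q|^3|E|$ term then comes from summing over all SCCs and using that the sets $Q_{D,t}$ are pairwise disjoint across~$D$, so $\sum_D\sum_t|Q_{D,t}|\,|E(t)|\le |Q|\sum_t|E(t)|=|Q|\,|E|$. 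Similarly, the $|\delta|^2|E|$ term for $\Co(d)$ comes from $O(|\delta|\,|T_D|)$ per SCC and $\sum_D|T_D|\le|\delta|\,|E|$, not from ``$O(|\delta|^2)$ per relevant state.'' Your final caveat correctly flags the aggregation as an obstacle; the point is that your proposed aggregation is the wrong one.
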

\noindent Before we prove this theorem in Section~\ref{sub:pseudo-cuts}, we describe the algorithm from~\cite{16BKKKMW-CAV,baieunp1} and analyse the runtime of an efficient implementation.

\subsection{The Basic Linear System} \label{sub:linear-system}

Let $\cM = (S,M)$ be a Markov chain, $\iota$ an initial distribution.
Let $B \in \Real^{(Q\times S)\times(Q\times S)}$ be the following
matrix:
\begin{equation}\label{eq:Bmat}
B_{\langle q,s\rangle,\langle q',s'\rangle} = \left\{\begin{array}{ll}
M_{s,s'} & \textrm{if $q' \in \delta(q,s)$} \\
0 & \textrm{otherwise}
\end{array}\right.
\end{equation}
Define $\vec{z} \in \Real^{Q\times S}$ by $\vec{z}_{\langle q,s \rangle} = \Pr^{\cM}_s(\cL(\cA_q))$.
Then $\Pr^{\cM}_\iota(\cL(\cA)) = \sum_{q\in Q_0}\sum_{s \in S} \iota(s) \vec{z}_{\langle q,s \rangle}$.
Lemma 4 in \cite{baieunp1} implies that $\vec{z} = B\vec{z}$.

\begin{example}
Consider the UBA~$\cA$ from Figure~\ref{fig:automaton} and the two-state Markov chain~$\cM$ shown on the left of Figure~\ref{fig:product}.
The weighted graph on the right of Figure~\ref{fig:product} represents the matrix~$B$, obtained from
$\cA$ and $\cM$ according to Equation \eqref{eq:Bmat}.  It is natural
to think of $B$ as a product of $\cA$ and $\cM$.  Notice that $B$ is
not stochastic: the sum of the entries in each row (equivalently, the
total outgoing transition weight of a graph node) is not always one.
\label{ex:product}
\end{example}
\begin{figure}
\begin{center}
\begin{tikzpicture}[->,>=stealth',shorten >=1pt,auto,node distance=2.5cm,
                    semithick]
  \tikzstyle{every state}=[draw=black,text=black]
  
  
  \node[accepting,state] (A)                    {$q_0$};
  \node[state]         (B)       [right of=A] {$q_1$};
  \node[state]         (C)       [below of=A] {$q_2$};
  \node[state]         (D)       [below of=B] {$q_3$};

 \draw[->] (-1.3,0) -- (A);
 \path (A) edge [bend left] node {$a$} (B);
 \path (B) edge [loop right] node {$b$} (B);
 \path (B) edge node {$b$} (D);
 \path (B) edge [bend left] node {$a$} (A);
 \path (C) edge [loop left] node {$b$} (C);
 \path (C) edge node {$b$} (A);
 \path (C) edge [bend left] node {$a$} (D);
 \path (D) edge [bend left] node {$a$} (C);
  

\node[state] (X) [below of=C] {$a$};
\node[state] (Y) [right of=X] {$b$};

\path (X) edge [bend left] node {$\frac{1}{2}$} (Y);
\path (Y) edge [bend left] node {$\frac{1}{2}$} (X);
\path (X) edge [loop above] node {$\frac{1}{2}$} (X);
\path (Y) edge [loop above] node {$\frac{1}{2}$} (Y);
\end{tikzpicture}
\quad
\begin{tikzpicture}[->,>=stealth',shorten >=1pt,auto,node distance=2.1cm,
                    semithick, inner sep=1]
  \tikzstyle{every state}=[draw=black,text=black]

  \node[state,inner sep=0pt,scale=0.98]         (A)                    {$\langle q_3,b \rangle$};
  \node[state,inner sep=0pt,scale=0.98]         (B)       [right of=A] {$\langle q_1,b \rangle$};
  \node[state,inner sep=0pt,scale=0.98]	(C)	 [right of=B] {$\langle q_0,a \rangle$};
  \node[state,inner sep=0pt,scale=0.98]         (D)       [above of=C] {$\langle q_1,a \rangle$};
  \node[state,inner sep=0pt,scale=0.98]         (E)       [below of=B] {$\langle q_3,a \rangle$};
  \node[state,inner sep=0pt,scale=0.98]       (F)       [right of=E] {$\langle q_2,b \rangle$};
  \node[state,inner sep=0pt,scale=0.98]	(G)	 [right of=F] {$\langle q_0,b \rangle$};
  \node[state,inner sep=0pt,scale=0.98]       (H)       [below of=E] {$\langle q_2,a \rangle$};

 \path (B) edge node{$\frac{1}{2}$} (A);
 \path (B) edge [loop above] node{$\frac{1}{2}$} (B);
 \path (B) edge node{$\frac{1}{2}$} (D);
 \path (B) edge node{$\frac{1}{2}$} (E);

 \path (C) edge node{$\frac{1}{2}$} (B);
 \path (C) edge [bend left=15] node[inner sep=1pt]{$\frac{1}{2}$} (D);

 \path (D) edge [bend left=15,'] node[inner sep=1pt]{$\frac{1}{2}$} (C);
 \path (D) edge node{$\frac{1}{2}$} (G);

 \path (E) edge node{$\frac{1}{2}$} (F);
 \path (E) edge [bend left=15] node[inner sep=1pt]{$\frac{1}{2}$} (H);

 \path (F) edge node{$\frac{1}{2}$} (C);
 \path (F) edge [loop below] node{$\frac{1}{2}$} (F);
 \path (F) edge node{$\frac{1}{2}$} (G);
 \path (F) edge node{$\frac{1}{2}$} (H);

 \path (H) edge node{$\frac{1}{2}$} (A);
 \path (H) edge [bend left=15,'] node[inner sep=1pt]{$\frac{1}{2}$} (E);
 
  \begin{pgfonlayer}{background}
    \filldraw [line width=4mm,join=round,green!15]
      (D.north  -| H.west)  rectangle (H.south  -| D.east);
    \filldraw [line width=4mm,join=round,red!15]
      (A.north  -| A.west)  rectangle (A.south  -| A.east)
      (G.north  -| G.west)  rectangle (G.south  -| G.east);
    
  \end{pgfonlayer}
\end{tikzpicture}
\end{center}
\caption{The UBA from Figure~\ref{fig:automaton} and the Markov chain $\cM$ on the left, and their product, $B$, on the right. The (single) accepting recurrent SCC is shaded green, and the two other SCCs are shaded red.}
\label{fig:product}
\end{figure}
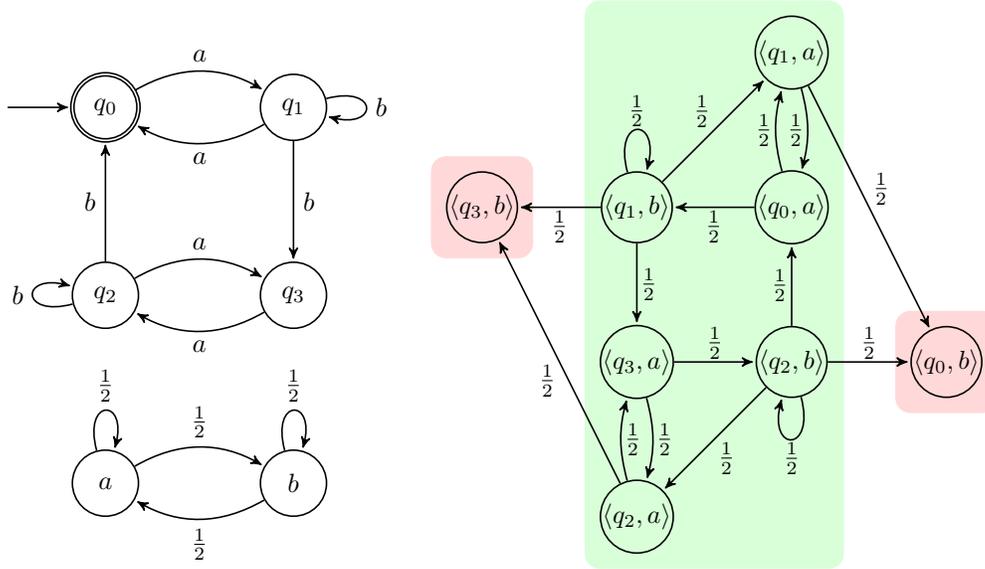
Although $\vec{z}$ is a solution the system of equations $\vec{\zeta} = B\vec{\zeta}$, this system does not uniquely identify $\vec{z}$.
Indeed, any scalar multiple of $\vec{z}$ is a solution for these equations.
To uniquely identify $\vec{z}$ by a system of linear equations, we need to analyse the SCCs of $B$.

All SCCs~$D$ satisfy $\rho(D) \leq 1$, see~\cite[Proposition~7]{baieunp1}.
An SCC $D$ of~$B$ is called \emph{recurrent} if $\rho(B_{D,D})=1$.
It is called \emph{accepting} if there is $\langle q,s \rangle \in D$ with $q \in F$.
%
\begin{example}
The matrix~$B$ from Figure \ref{fig:product} has three SCCs, namely the two singleton sets $\{\langle q_0,b \rangle\}$ and $\{\langle q_3,b \rangle\}$, and $D = \{\langle q_0,a \rangle,\langle q_1,a \rangle,\langle q_1,b \rangle,\langle q_2,a \rangle,\langle q_2,b \rangle,\langle q_3,a \rangle\}$. Only $D$ is recurrent;
indeed, $\vec{y} = (\vec{y}_{\langle q_0,a \rangle},\vec{y}_{\langle q_1,a \rangle},\vec{y}_{\langle q_1,b \rangle},\vec{y}_{\langle q_2,a \rangle},\vec{y}_{\langle q_2,b \rangle},\vec{y}_{\langle q_3,a\rangle})^\top = (2,1,3,1,3,2)^\top$ is a dominant eigenvector with $B_{D,D}\vec{y}=\vec{y}$.
Since $q_0$ is accepting, $D$ is accepting recurrent.
\end{example}
Denote the set of accepting recurrent SCCs by $\cD_+$ and the set of non-accepting recurrent SCCs by $\cD_0$.
By~\cite[Lemma~8]{baieunp1}, for $D \in \cD_+$ we have $\vec{z}_d > 0$ for all $d \in D$, and
for $D \in \cD_0$ we have $\vec{z}_D = \vec{0}$.
Hence, for $D \in \cD_+$, there exists a \emph{$D$-normaliser}, i.e., a vector $\vec{\mu} \in \mathbb{R}^D$ such that $\vec{\mu}^\top \vec{z}_D = 1$.
This gives us a system of linear equations that identifies $\vec{z}$ uniquely~\cite{baieunp1}:
\begin{lemma}[Lemma 12 in \cite{baieunp1}]\label{lem:linear-system}
Let $\cD_+$ be the set of accepting recurrent SCCs, and $\cD_0$ the
set of non-accepting recurrent SCCs.  For each $D \in \cD_+$ let
$\vec{\mu}(D)$ be a $D$-normaliser.  Then $\vec{z}$~is the unique
solution of the following linear system:
\begin{equation}
\begin{aligned}
&&\vec{\zeta} & = B \vec{\zeta} \\
\text{for all } D \in \cD_+ :&& \quad \vec{\mu}(D)^\top \vec{\zeta}_D & = 1 \\
\text{for all } D \in \cD_0 :&& \quad \vec{\zeta}_D & = \vec{0}
\end{aligned}
\label{eq:linear-system}
\end{equation}
\end{lemma}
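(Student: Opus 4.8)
\medskip
\noindent\textbf{Proof plan.}
The plan is to prove the two halves of ``unique solution'' separately: that $\vec{z}$ solves the system~\eqref{eq:linear-system}, and that every solution of~\eqref{eq:linear-system} equals~$\vec{z}$. The first half is immediate from facts already in hand: $\vec{z} = B\vec{z}$ was recorded above (\cite[Lemma~4]{baieunp1}); for $D \in \cD_+$ the vector $\vec{\mu}(D)$ is by hypothesis a $D$-normaliser, so $\vec{\mu}(D)^\top \vec{z}_D = 1$; and for $D \in \cD_0$, \cite[Lemma~8]{baieunp1} gives $\vec{z}_D = \vec{0}$. Thus all the work is in the uniqueness direction.

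For uniqueness, let $\vec{\zeta}$ be an arbitrary solution of~\eqref{eq:linear-system}. I would show $\vec{\zeta}_D = \vec{z}_D$ for every SCC~$D$ of~$B$ by induction along a topological order of the condensation of~$B$ that handles an SCC only after all SCCs reachable from it. Fix~$D$ and assume $\vec{\zeta}_{D'} = \vec{z}_{D'}$ for every SCC $D' \ne D$ reachable from~$D$. Restricting $\vec{\zeta} = B\vec{\zeta}$ to the rows indexed by~$D$ and splitting off the within-$D$ block gives $\vec{\zeta}_D = B_{D,D}\vec{\zeta}_D + \vec{b}$, where $\vec{b} := \sum_{D'} B_{D,D'}\vec{\zeta}_{D'}$ is summed over the SCCs $D' \ne D$ reachable from~$D$; by the induction hypothesis $\vec{b} = \sum_{D'} B_{D,D'}\vec{z}_{D'}$, and the same identity holds with $\vec{z}$ replacing $\vec{\zeta}$ because $\vec{z} = B\vec{z}$. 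Subtracting yields $(I - B_{D,D})(\vec{\zeta}_D - \vec{z}_D) = \vec{0}$. If $D$ is not recurrent then $\rho(B_{D,D}) < 1$ (every SCC satisfies $\rho \le 1$), so $1$ is not an eigenvalue of $B_{D,D}$, hence $I - B_{D,D}$ is invertible and $\vec{\zeta}_D = \vec{z}_D$ outright.

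The substantive case is $D$ recurrent, so $\rho(B_{D,D}) = 1$ and $B_{D,D}$ is nonnegative and strongly connected; by Perron--Frobenius~\cite[Theorems 2.1.1, 2.1.4]{book:bermanP94} the eigenspace of $B_{D,D}$ for eigenvalue~$1$ is one-dimensional, i.e.\ $\ker(I - B_{D,D})$ is a line. The key intermediate step is that the inflow~$\vec{b}$ from the SCCs below~$D$ vanishes: pairing $\vec{z}_D = B_{D,D}\vec{z}_D + \vec{b}$ with a strictly positive left eigenvector $\vec{u}^\top$ of $B_{D,D}$ for eigenvalue~$1$ (Perron--Frobenius applied to $B_{D,D}^\top$) gives $\vec{u}^\top \vec{b} = 0$, and since $\vec{b} \ge \vec{0}$ (nonnegative matrix entries applied to the nonnegative probability vector~$\vec{z}$) while $\vec{u} > \vec{0}$, we get $\vec{b} = \vec{0}$. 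Hence $B_{D,D}\vec{z}_D = \vec{z}_D$, so $\vec{z}_D$ itself lies on the line $\ker(I - B_{D,D})$. A recurrent SCC belongs to $\cD_+ \cup \cD_0$. If $D \in \cD_0$, then~\eqref{eq:linear-system} forces $\vec{\zeta}_D = \vec{0} = \vec{z}_D$. If $D \in \cD_+$, then $\vec{z}_D > \vec{0}$ by \cite[Lemma~8]{baieunp1}, so $\vec{z}_D$ spans the line and $\vec{\zeta}_D = (1 + c)\vec{z}_D$ for some scalar~$c$; the normaliser equations $\vec{\mu}(D)^\top \vec{\zeta}_D = 1 = \vec{\mu}(D)^\top \vec{z}_D$ then give $1 + c = 1$, i.e.\ $\vec{\zeta}_D = \vec{z}_D$. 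This closes the induction.

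I expect the main obstacle to be the recurrent case, and within it the step that must not be skipped is $\vec{b} = \vec{0}$: this is precisely what upgrades ``$\vec{\zeta}_D - \vec{z}_D$ lies in a one-dimensional eigenspace'' to ``$\vec{\zeta}_D$ is a scalar multiple of~$\vec{z}_D$'', so that the single scalar equation the system provides (the normaliser equation, or the zero condition) pins~$\vec{\zeta}_D$ down. The remaining ingredients are routine: nonnegativity of~$B$ and of~$\vec{z}$, the Perron--Frobenius facts recalled in Section~\ref{sec:prelim}, invertibility of $I - B_{D,D}$ for non-recurrent~$D$, and bookkeeping over the SCC DAG.
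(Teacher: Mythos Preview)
Your proof is correct and matches the approach the paper gestures at. The paper does not actually prove this lemma; it cites it from~\cite{baieunp1} and offers only a one-line justification (``Uniqueness follows from the fact that the system $\vec{\zeta} = B\vec{\zeta}$ describes the eigenspace of the dominant eigenvalue \ldots\ of a nonnegative strongly connected matrix \ldots, and such eigenspaces are one-dimensional''). Your SCC-by-SCC induction, together with the Perron--Frobenius one-dimensionality of $\ker(I-B_{D,D})$ in the recurrent case, is exactly the argument that makes that sentence rigorous, and your observation that $\vec{b}=\vec{0}$ (via a positive left eigenvector paired against the nonnegative inflow) is precisely the missing step needed to conclude that $\vec{\zeta}_D$, not just $\vec{\zeta}_D-\vec{z}_D$, lies on the Perron line.
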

Uniqueness follows from the fact that the system $\vec{\zeta} = B \vec{\zeta}$ describes the eigenspace of the dominant eigenvalue (here,~$1$) of a nonnegative strongly connected matrix (here,~$B$), and such eigenspaces are one-dimensional.
This leads to the following result:
\begin{proposition}\label{prop:mccomp}
Suppose $N$ is the runtime of an algorithm to calculate a normaliser for each accepting recurrent SCC.
Then one can compute $\Pr^{\cM}_\iota(\cL(\cA))$ in time $O(|Q|^\kappa|S|^\kappa)+N$.
\end{proposition}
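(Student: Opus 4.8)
The plan is to turn Lemma~\ref{lem:linear-system} into an algorithm and bound the cost of each step, the point being that everything apart from the normaliser computation is dominated by one SCC decomposition of~$B$ and one linear solve in $n := |Q\times S| = |Q||S|$ unknowns.

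First I would build the matrix $B \in \Real^{(Q\times S)\times(Q\times S)}$ directly from Equation~\eqref{eq:Bmat}; since $B$ has $n^2$ entries and each one is determined in constant time using the lookup table for~$\delta$, this costs $O(n^2)$, which lies within $O(n^\kappa)$ because $\kappa \ge 2$. Next, compute the SCCs of the graph of~$B$ in time linear in the size of that graph, i.e.\ $O(n^2)$. For each SCC~$D$ I would decide whether it is recurrent: since every SCC satisfies $\rho(B_{D,D}) \le 1$ (by~\cite[Proposition~7]{baieunp1}) and $B_{D,D}$ is nonnegative, the Perron--Frobenius theorem gives that $\rho(B_{D,D}) = 1$ if and only if $1$ is an eigenvalue of $B_{D,D}$, i.e.\ if and only if $I - B_{D,D}$ is singular; this invertibility test costs $O(|D|^\kappa)$, and $\sum_D |D|^\kappa \le \bigl(\sum_D |D|\bigr)^\kappa = n^\kappa$. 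Deciding, for each recurrent~$D$, whether it is accepting (whether some $\langle q,s\rangle \in D$ has $q \in F$) is one linear pass. This partitions the recurrent SCCs into $\cD_+$ and $\cD_0$.

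Now, for each $D \in \cD_+$, run the assumed algorithm to obtain a $D$-normaliser $\vec{\mu}(D)$; by hypothesis the total cost of this is~$N$. Then assemble the linear system~\eqref{eq:linear-system}: it has $n$ unknowns and $n + |\cD_+| + \sum_{D\in\cD_0}|D| = O(n)$ equations, and by Lemma~\ref{lem:linear-system} it has a unique solution, namely~$\vec{z}$. Solving it — e.g.\ via the Moore--Penrose pseudo-inverse as recalled in the preliminaries, possibly after discarding redundant rows — costs $O(n^\kappa) = O(|Q|^\kappa|S|^\kappa)$. Finally return $\sum_{q\in Q_0}\sum_{s\in S}\iota(s)\,\vec{z}_{\langle q,s\rangle} = \Pr^{\cM}_\iota(\cL(\cA))$, which is linear. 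Summing up, the running time is $O(n^2) + O(n^\kappa) + N + O(n) = O(|Q|^\kappa|S|^\kappa) + N$.

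The SCC decomposition and the final summation are entirely routine; the only point that needs a short argument is the recurrence test, where the computation of a spectral radius is replaced by an invertibility check, which is justified by the a-priori bound $\rho(B_{D,D}) \le 1$ together with Perron--Frobenius. I also need to keep the number of equations in~\eqref{eq:linear-system} down to $O(n)$ so that the solve really is $O(n^\kappa)$ despite the system being (slightly) overdetermined, with uniqueness of the solution being what makes the pseudo-inverse approach legitimate. I expect the only genuine obstacle — if any — to be bookkeeping: making sure that forming~$B$, extracting and testing the submatrices $B_{D,D}$, and ignoring the SCCs that are neither in $\cD_+$ nor in $\cD_0$ all stay within the claimed bound.
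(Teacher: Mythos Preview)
Your proposal is correct and follows essentially the same approach as the paper's own proof: build $B$, compute its SCCs, test recurrence of each SCC via invertibility of $I-B_{D,D}$, call the normaliser oracle, and solve the linear system~\eqref{eq:linear-system}. If anything, you give slightly more justification (the Perron--Frobenius argument for the recurrence test and the bound $\sum_D |D|^\kappa \le n^\kappa$) than the paper does.
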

\begin{proof}
Lemma~\ref{lem:linear-system} implies correctness of the following procedure to calculate $\Pr^{\cM}_\iota(\cL(\cA))$:
\begin{enumerate}
\item Set up the matrix~$B$ from Equation~\eqref{eq:Bmat}.
\item Compute the SCCs of~$B$.
\item For each SCC $C$, check whether $C$ is recurrent. 
\item For each accepting recurrent SCC~$D$, compute its $D$-normaliser~$\vec{\mu}(D)$.
\item Compute~$\vec{z}$ by solving the linear system~\eqref{eq:linear-system} in Lemma~\ref{lem:linear-system}.
\item Compute $\Pr^{\cM}_\iota(\cL(\cA)) = \sum_{s \in S}\sum_{q \in Q_0}\iota(s)\vec{z}_{q,s}$.
\end{enumerate}
%
%
One can set up $B$ in time $O(|Q|^2|S|^2)$.
Using Tarjan's algorithm one can compute the SCCs of~$B$ in time linear in the vertices and edges of~$B$, hence in $O(|Q|^2|S|^2)$~\cite{tarj1972}. 
One can find those SCCs $D$ which are recurrent in time $O(|Q|^\kappa|S|^\kappa)$ by checking if $I - B_{D,D}$ is invertible.
The linear system~\eqref{eq:linear-system} has $O(|Q||S|)$ equations, and thus can be solved in time $O(|Q|^\kappa|S|^\kappa)$.
Hence the total runtime is $O(|Q|^\kappa|S|^\kappa)+N$. 
\end{proof}
In Section~\ref{sub:cuts} we describe the combinatorial, \emph{cut} based, approach from~\cite{16BKKKMW-CAV,baieunp1} to calculating $D$-normalisers and analyse its complexity.
In Section~\ref{sub:pseudo-cuts} we describe a novel linear-algebra based approach, which is faster in terms of the automaton.

\subsection{Calculating \texorpdfstring{$D$}{D}-Normalisers Using Cuts}
\label{sub:cuts}

For the remainder of the paper, let $D$ be an accepting recurrent SCC.
A \emph{fibre over $s \in S$} is a subset of $D$ of the form $\alpha\times\{s\}$ for some $\alpha \subseteq Q$.
Given a fibre $f = \alpha\times\{s\}$ and a state $s' \in S$, if $M_{s,s'} > 0$ we define the fibre $f \then s'$ as follows:
\begin{equation*}
f \then s' := \{\langle q,s' \rangle \; | \; q \in \delta(\alpha,s)\}\cap D.
\end{equation*}
If $M_{s,s'} = 0$, then $f \then s'$ is undefined, and for $w \in S^*$ we define $f \then w = f$ if $w = \varepsilon$ and $f \then ws' = (f \then w)\then s'$. If $f = \{d\}$ for some $d \in D$ we may write $d \then s'$ for $f \then s'$.

We call a fibre $c$ a \emph{cut} if $c = d \then v$ for some $v \in S^*$ and $d \in D$, and $c \then w \neq \emptyset$ for all $w \in S^*$ whenever $c \then w$ is defined.
Note that if $c$ is a cut then so is $c \then w$ whenever it is defined.
Given a cut $c \subseteq D$ we call its characteristic vector $\cvec{c} \in \{0,1\}^D$ a \emph{cut vector}. In the example in Figure~\ref{fig:product}, it is easy to see that $\{\langle q_1,b\rangle\} = \langle q_0,a\rangle \then b$ is a cut.
%
\begin{lemma}[Lemma 10 in \cite{baieunp1}]\label{lem:cutnormalises}
There exists a cut.
Any cut vector $\vec{\mu}$ is a normaliser, i.e., $\vec{\mu}^\top \vec{z}_D = 1$.
\end{lemma}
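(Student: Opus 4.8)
The claim has two parts: existence of a cut, and the fact that every cut vector normalises~$\vec z_D$. For the existence part, the plan is to exhibit a word that ``stabilises'' the fibres. Starting from any singleton $\{d\}$ with $d \in D$, consider the fibres $\{d\}\then v$ as $v$ ranges over all words in $S^*$ along which the composition is defined (i.e.\ $v$ is a path in the Markov chain from the second component of~$d$). Among all such nonempty fibres, pick one, call it~$c = \{d\}\then v$, whose cardinality $|c|$ is \emph{minimal} subject to the additional requirement that $c\then w \ne \emptyset$ whenever it is defined --- but one has to be careful that such a~$c$ exists. Here I would use that $D$ is recurrent: since $\rho(B_{D,D}) = 1$, the matrix $B_{D,D}$ has a strictly positive dominant left and right eigenvector (by Perron--Frobenius, using that $D$ is an SCC hence $B_{D,D}$ is irreducible), and strong connectivity of~$D$ together with stochasticity-like reachability in~$\cM$ guarantees that from any~$d$ one can reach, along some Markov-chain path, a nonempty fibre that cannot be ``killed'' by any future extension. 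Concretely, the set of reachable nonempty fibres is finite, so one can take a fibre $c$ of minimal size among those reachable from some singleton; if $c\then w = \emptyset$ for some~$w$, then along the way to emptiness some intermediate fibre would have to be a proper nonempty subset that is itself reachable and smaller --- contradicting minimality if we also track that subsets of fibres along a common word stay subsets. I would make this precise via the monotonicity observation: if $f \subseteq f'$ (as subsets of~$D$ over the same $s$) then $f\then w \subseteq f'\then w$ whenever both are defined. The cleanest route is actually the standard pumping argument: iterate the map $f \mapsto f\then v$ for a well-chosen fixed~$v$ until the image stabilises in size, using that sizes are non-increasing along extensions once we are inside the set of ``persistent'' fibres.

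For the normalisation part, the plan is to invoke the eigenvector $\vec z_D$ directly. We know $\vec z = B\vec z$, hence $\vec z_D = B_{D,D}\,\vec z_D$ (the out-of-$D$ contributions vanish because $D$ is recurrent, so $D$ is ``bottom'' in the relevant sense --- this is exactly what \cite[Lemma~8]{baieunp1} and the surrounding discussion give, with $\vec z_D > 0$ componentwise for $D \in \cD_+$). Now fix a cut $c = d\then v$ and set $s$ to be its common second component. The key identity to establish is that $\cvec{c}^\top \vec z_D$ equals a ``conditional acceptance probability'' that does not depend on which cut we chose, and in fact equals~$1$. The intuitive content, as the paper's Example~\ref{ex-intro-cut} explains, is ergodicity: a random infinite continuation from a fibre that is a cut almost surely induces an accepting run, because the run can never die out (that is the defining property $c\then w \ne \emptyset$) and $D$ is an accepting SCC so accepting states are visited infinitely often along almost every surviving branch. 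Formally I would argue: $\sum_{q : \langle q,s\rangle \in c} \Pr^{\cM}_s(\cL(\cA_q)) = \Pr^{\cM}_s\big(\bigcup_{q}\cL(\cA_q)\big)$ by unambiguousness (distinct accepting runs cannot share the same word, and the diamond-freeness assumption ensures the count of runs from the fibre is exactly the sum), and this union has probability~$1$ because from a cut the induced ``subset automaton'' run never reaches~$\emptyset$ and, $D$ being accepting recurrent, the almost-sure behaviour visits~$F$ infinitely often --- a Borel--Cantelli / ergodicity argument on the finite irreducible structure~$B_{D,D}$.

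The main obstacle, I expect, is the normalisation step rather than existence: making the ergodicity argument rigorous requires care in translating ``the fibre never becomes empty'' into ``almost every infinite word has a branch surviving forever'' and then into ``that surviving branch is accepting a.s.'', while simultaneously using unambiguousness to turn a union probability into a sum. The diamond-freeness assumption (Lemma~\ref{lem:diamondfree}) is what lets me identify runs from a fibre with branches of a tree without multiplicity issues. A secondary subtlety in the existence part is ensuring the minimal-size persistent fibre is actually realised as $d\then v$ from a \emph{single} state $d\in D$ rather than from an arbitrary subset; this follows because $D$ is an SCC, so every state of $D$ is reachable in $B$ from every other, and one can prepend a path reaching $d$ before applying~$v$. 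Since the statement is quoted verbatim as \cite[Lemma~10]{baieunp1}, I would keep the write-up brief, citing that source for the details of the ergodic argument and only sketching the pumping argument for existence.
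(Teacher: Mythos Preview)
The paper does not give its own proof of this lemma; it cites \cite{baieunp1} and offers only the one-sentence gloss ``$\vec{\mu}^\top \vec{z}_D \le 1$ follows from unambiguousness, and $\vec{\mu}^\top \vec{z}_D \not< 1$ follows from an ergodicity argument.'' Your plan for the normalisation half matches this exactly, and your closing remark---keep the write-up brief and cite \cite{baieunp1}---is the right call.

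Your existence argument, however, has a real gap. You propose to take a fibre~$c$ of minimal cardinality among all fibres reachable from a singleton, and then argue that if $c\then w=\emptyset$ for some~$w$, some intermediate fibre $c\then w'$ along the way to emptiness would be a proper nonempty subset of~$c$, contradicting minimality. But intermediate fibres $c\then w'$ are \emph{not} subsets of~$c$: they live over different Markov-chain states, and even projected to~$Q$ they can be strictly larger than~$c$ before collapsing to~$\emptyset$. (Indeed, every singleton $\{d\}$ is reachable from itself via~$\varepsilon$, so your minimum is always~$1$, yet singletons are rarely cuts.) The monotonicity you invoke---``subsets of fibres along a common word stay subsets''---is true, but it does not say that $c\then w'$ is a subset of~$c$. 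Your fallback ``pumping'' sketch is too vague to evaluate.

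The approach actually used in \cite{baieunp1}, and reflected in this paper via Lemma~\ref{lem:get-larger} and the proof of Proposition~\ref{prop:findcut}, goes the opposite direction: it builds an \emph{increasing} chain of fibres. One fixes $d\in D$, starts with a word~$w$ such that $d\in d\then w$, and shows that whenever $d\then w$ is not yet a cut there exist~$v$ and~$e\ne d$ with $d\then v\supseteq\{d,e\}$ and $d\then vw\supsetneq d\then w$. Since fibres are bounded in size by~$|Q|$, this terminates in a cut. If you want to sketch existence rather than simply citing \cite{baieunp1}, this is the argument to sketch.
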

Loosely speaking, $\vec{\mu}^\top \vec{z}_D \le 1$ follows from unambiguousness, and $\vec{\mu}^\top \vec{z}_D \not< 1$ follows from an ergodicity argument (intuitively, all states in the cut are almost surely visited infinitely often).
%
%
The following lemma is the basis for the cut computation algorithm in~\cite{16BKKKMW-CAV,baieunp1}:
\begin{lemma}[Lemma 17 in \cite{baieunp1}] \label{lem:get-larger}
Let $D \subseteq Q \times S$ be a recurrent SCC.
Let $d \in D$.
Suppose $w \in S^*$ is such that $d \then w \ni d$ is not a cut.
Then there are $v \in S^*$ and $e \ne d$ with $d \then v \supseteq \{d,e\}$ and $e \then w \ne \emptyset$. For any such $e$, $d \then w \cap e \then w = \emptyset$.
Hence $d \then v w \supseteq \{d,e\} \then w \supsetneq d \then w$.
\end{lemma}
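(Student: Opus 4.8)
We are given a recurrent SCC $D$, a state $d \in D$, and a word $w \in S^*$ such that $d \then w$ is defined, contains $d$, but is not a cut. We must produce $v \in S^*$ and $e \neq d$ with $d \then v \supseteq \{d,e\}$ and $e \then w \neq \emptyset$, and then show $d \then w \cap e \then w = \emptyset$ for any such $e$, from which the final inclusion chain follows.

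**The plan for the existence of $v$ and $e$.** Since $c := d \then w$ is not a cut, by the definition of cut and using $c \ni d$ (so $c$ is of the form $d' \then v'$), the failure must be that there is some $u \in S^*$ with $c \then u$ defined but $c \then u = \emptyset$. So there is a word $u$ with $M$-transitions all positive along $s(c) \to \cdots$ such that no state of $\delta(\alpha, s \cdot u)$ (with $c = \alpha \times \{s\}$) lands back in $D$. Now consider the fibre $d \then w u$: it is empty. I would track why: $d \in c$, and from $d$ the automaton together with the Markov chain does have a run on $u$ (since $D$ is an SCC and... actually I need that $d \then wu$ relative to $D$ is nonempty). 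The right way: because $D$ is a recurrent SCC with dominant eigenvector positive everywhere, from $d$ there is a word $u'$ (a cycle word) with $d \then u' \ni d$; more relevantly, I want to use recurrence to guarantee that from $d$ one can "come back". The cleaner route, and I expect this is how the original argument goes: since $d \then wu = \emptyset$ but $d$ is in $D$ and $D$ is strongly connected, there must be a state $e$ reachable alongside $d$ at some earlier point that "carries" the run on $wu$; more precisely, I would find $v$ such that $d \then v \ni e$ with $e \then w \neq \emptyset$ by a pigeonhole/reachability argument inside $D$: pick a run on $w$ from some state of $D$ witnessing $d \then w$ nonemptiness is insufficient — instead use that $D$ recurrent implies $d$ lies on a cycle, hence there is $v$ with $d \then v \ni d$, and by strong connectivity we can route to any $e$; the key is to choose $e$ so that $e \then w \neq \emptyset$, which exists because otherwise $d \then w$ could not have been "fed" from outside. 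I would formalize this by taking $e$ to be a state that at some prefix of a problematic computation sits together with $d$ and whose continuation on $w$ survives in $D$.

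**The disjointness $d \then w \cap e \then w = \emptyset$.** This is the crisp part and where diamond-freeness and unambiguousness enter. Suppose $q \in (d \then w) \cap (e \then w)$, say $q = \langle p, s'' \rangle$. Then there is a run on $w$ (with the underlying Markov-chain word) from $d$ to $q$ and another from $e$ to $q$. Combined with $d \then v \supseteq \{d,e\}$, there are runs from $d$ on $v$ to both $d$ and $e$. Concatenating: from $d$ on $vw$ there is a run $d \to d \to q$ and a run $d \to e \to q$. If these two runs on $vw$ are distinct, we have a diamond from $d$ to $q$ on word $vw$, contradicting the assumption (for the rest of the paper) that the UBA is diamond-free. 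I would argue they are distinct because the first visits $d$ after the $v$-portion while the second visits $e \neq d$ there. This yields the contradiction, so the intersection is empty.

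**The inclusion chain.** With $d \then w \cap e \then w = \emptyset$: we have $\{d,e\} \then w = (d \then w) \cup (e \then w)$ — this is just how $\then$ acts on a set being the union of its action on points (intersected with $D$), and here both pieces are already inside $D$. Since $e \then w \neq \emptyset$ and is disjoint from $d \then w$, the union strictly contains $d \then w$, giving $\{d,e\} \then w \supsetneq d \then w$. Finally $d \then vw = (d \then v) \then w \supseteq \{d,e\} \then w$ by monotonicity of $\then$ in its fibre argument. Chaining these gives $d \then vw \supsetneq d \then w$.

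**Main obstacle.** The genuinely non-routine step is the \emph{existence} of the pair $(v,e)$: extracting from "$d \then w$ is not a cut" (i.e., some future extension dies) the existence of a \emph{separate} state $e$ that co-occurs with $d$ at some reachable fibre \emph{and} whose $w$-image is nonempty. The disjointness and the inclusion chain are short consequences of diamond-freeness and monotonicity; the existence argument requires carefully using recurrence of $D$ (dominant eigenvector strictly positive, so $d$ lies on cycles and $D$ is strongly connected) to locate $e$, and I would expect to spend most of the effort there, likely via a minimal-counterexample or König/pigeonhole argument on the tree of runs.
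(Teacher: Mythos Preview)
The paper does not prove this lemma; it is quoted from \cite{baieunp1} (labelled ``Lemma~17 in \cite{baieunp1}'') and used without proof in the sketch of Proposition~\ref{prop:findcut}. So there is no in-paper argument to compare against.

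On the merits of your proposal: the disjointness step and the final inclusion chain are correct and essentially complete. If some $q \in (d \then w) \cap (e \then w)$, then concatenating with the two $v$-runs $d \to d$ and $d \to e$ produces two runs $d \to q$ on $vw$ that differ at position $|v|$ (one passes through $d$, the other through $e \neq d$), i.e., a diamond, contradicting the paper's standing diamond-freeness assumption. The chain $d \then vw \supseteq \{d,e\} \then w = (d \then w) \cup (e \then w) \supsetneq d \then w$ then follows from monotonicity and the fact that $e \then w$ is nonempty and disjoint from $d \then w$.

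The existence part, however, is not a proof. You correctly extract from ``$d \then w$ is not a cut'' a word $u$ with $d \then wu = \emptyset$, but what follows is a sequence of abandoned starts (``actually I need\ldots'', ``the cleaner route\ldots'', ``more relevantly\ldots'') that never lands, and your closing suggestion of a K\"onig/pigeonhole argument is not the right tool. The clean route uses Lemma~\ref{lem:cutnormalises} (a cut exists) together with strong connectivity of~$D$: take any cut $c = d_0 \then v_0$; by strong connectivity pick $v_1$ with $d_0 \in d \then v_1$, so $d \then v_1 v_0 \supseteq c$; then pick $v_2$ with $d \in d \then v_1 v_0 v_2$ and set $v := v_1 v_0 v_2$. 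Now $d \then v$ contains both $d$ and the cut $c \then v_2$, and since $(c \then v_2) \then wu \neq \emptyset$ while $d \then wu = \emptyset$, there is $e \in c \then v_2 \subseteq d \then v$ with $e \then w \neq \emptyset$ and necessarily $e \neq d$. Your sketch never invokes the existence of a cut (or anything equivalent), and the spectral condition ``recurrent'' on its own does not hand you such an~$e$ via the reachability heuristics you describe.
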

This suggests a way of generating an increasing sequence of fibres, culminating in a cut.
We prove the following proposition:
\begin{restatable}{ourproposition}{propfindcut}\label{prop:findcut}
Let $D \subseteq Q \times S$ be a recurrent SCC.
Denote by $T$ the set of edges in~$B_{D,D}$.
One can compute a cut in time
$O(|Q|^2 |\delta| |D| + |\delta| |T|)$.
\end{restatable}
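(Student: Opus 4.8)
The plan is to turn Lemma~\ref{lem:get-larger} into an iterative algorithm that grows a fibre until it becomes a cut, and then to bound the cost of each iteration. We start from any singleton fibre $\{d\}$ with $d \in D$. The key invariant we maintain is a fibre $f$ together with a word $w_f \in S^*$ witnessing that $f = d \then w_f$, where $d$ is a fixed chosen state of $D$; we also maintain the property $d \in f$, which is possible because $D$ is an SCC, so from any reachable fibre we can always find a continuation bringing $d$ back into it. To test whether the current $f$ (with $d \in f$) is a cut, we need to check whether $f \then w \ne \emptyset$ for every $w \in S^*$ for which $f \then w$ is defined; equivalently, whether there is no word $w$ with $M$-positive steps along which every state of $f$ dies out simultaneously. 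This is a reachability question in a product-like structure over subsets, but by Lemma~\ref{lem:get-larger} we do not need to solve it in full generality: if $f$ is not a cut, then there is a word $w$ with $d \then w \ni d$ that is witnessed-not-a-cut, and then there is a state $e \ne d$ co-reachable with $d$ — i.e.\ $d \then v \supseteq \{d,e\}$ for some $v$ — such that $e \then w \ne \emptyset$ and $d \then w \cap e \then w = \emptyset$. Consequently $d \then vw \supsetneq d \then w$, strictly enlarging the fibre.

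Concretely, I would proceed as follows. First, compute the "co-reachability" relation on $D$ from $d$: the set of pairs $(e, v)$ — or at least, for each $e$, one word $v_e$ — with $d \then v_e \supseteq \{d, e\}$. This is a reachability computation in the graph whose vertices are fibres of the form $\{d\} \cup X$ reachable from $\{d\}$; but we only need pairs, so it suffices to do a BFS/DFS in the graph on $D \times D$ (or really $\{d\}\times D$) tracking which states are simultaneously reachable with $d$. Each exploration step follows a transition of $B_{D,D}$, so this costs $O(|D| \cdot |T|)$ or so. Second, run the enlargement loop: given the current fibre $f$ with $d \in f$, search for a word $w$ such that $f \then w$ is defined and $f \then w = \emptyset$ would follow — actually, we search for a $w$ with $d \then w \ni d$ such that $d \then w$ is "locally stuck", and if we fail to find a continuation keeping the fibre non-empty, we have found a death-word $w$; then we pick a co-reachable $e$ with $e \then w \ne \emptyset$ (its existence is guaranteed by Lemma~\ref{lem:get-larger}), and replace $f$ by $d \then v_e w$, which by the lemma is a strict superset of the old $d \then w \supseteq$ (a suitably aligned copy of) $f$. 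Since each enlargement increases $|f|$ by at least one and $|f| \le |D|$, there are at most $|D|$ enlargements. When no death-word is found, $f$ is a cut and we stop.

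For the complexity bookkeeping: the co-reachability precomputation costs roughly $O(|\delta| |T|)$, since co-reachable pairs live over the edge set $T$ of $B_{D,D}$ and each expansion of a pair consults $\delta$. Each of the $\le |D|$ enlargement rounds requires (a) searching for a death-word, which amounts to a reachability search over the fibre graph reachable from $f$ — bounded by iterating over states of $D$ and, for each, over the relevant Markov-chain and automaton transitions, costing $O(|Q|^2 |\delta|)$ per round when amortised over the states of the fibre and their $\delta$-images — and (b) forming $d \then v_e w$, which is following a precomputed word. Multiplying the per-round cost $O(|Q|^2 |\delta|)$ by $|D|$ rounds gives the $O(|Q|^2 |\delta| |D|)$ term; adding the one-off $O(|\delta| |T|)$ gives the claimed bound $O(|Q|^2 |\delta| |D| + |\delta| |T|)$.

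The main obstacle, and the part needing the most care, is the death-word search and the precise amortisation of its cost. Detecting that $f$ is \emph{not} a cut requires certifying the existence of a word along which $f$ dies; done naively this is a reachability problem over exponentially many subset-fibres. The trick — already implicit in the structure of Lemma~\ref{lem:get-larger} — is that one only ever needs to track fibres of the bounded-size form "$d$ together with one extra state" during the co-reachability phase, and during the enlargement phase one searches along words that keep $d$ present, so the relevant state space per round is polynomial. Making this rigorous, and in particular justifying that the death-word $w$ found is exactly the $w$ to which Lemma~\ref{lem:get-larger} applies (so that the promised $e$ and the strict growth are available), is where the real work lies; the arithmetic of combining $|Q|^2|\delta||D|$ and $|\delta||T|$ is then routine.
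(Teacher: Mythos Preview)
Your outline has the right skeleton---precompute co-reachability of~$d$, then iterate Lemma~\ref{lem:get-larger} to grow the fibre---but the central step is misconceived, and this is exactly the step you flag as ``the main obstacle'' without resolving.

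You treat the~$w$ of Lemma~\ref{lem:get-larger} as a \emph{death-word} to be searched for, i.e., a word along which the current fibre vanishes. That is not what~$w$ is: in the lemma, $w$ is the word defining the current fibre itself, with $d \in d\then w$; the fibre is non-empty by construction. The lemma says, contrapositively, that if no $e \in \Co(d)\setminus\{d\}$ satisfies $e \then w \ne \emptyset$, then $d \then w$ is already a cut. So the termination test is not ``does a death-word exist for $f$?''---a question you correctly note is a priori a subset-reachability problem---but rather ``does some co-reachable $e$ survive the current~$w$?''. The paper never searches for death-words at all. It maintains the set $\Survives = \{e \in D : e \then w \ne \emptyset\}$, and each time a prefix $v = \CoPath(d)(e)$ is prepended to~$w$, it updates $\Survives$ by processing $v$ backwards, one letter at a time, each letter costing $O(|\delta|)$.

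Your complexity bookkeeping then doesn't match the actual argument. The fibre $d \then w$ is always a fibre over the fixed state~$s$, so its size is at most~$|Q|$, giving at most $|Q|$ (not $|D|$) iterations. Each iteration prepends a word of length at most $|Q|\,|D|$ (the bound on $|\CoPath(d)(e)|$ from Lemma~\ref{lem:calccod}), so the final~$w$ has length at most $|Q|^2|D|$; processing each letter of~$w$ to update $\Survives$ costs $O(|\delta|)$, yielding the $O(|Q|^2|\delta||D|)$ term. The $O(|\delta||T|)$ term (plus an $O(|Q|^2|D|)$ that is absorbed) comes from Lemma~\ref{lem:calccod} for computing $\Co(d)$ and the $\CoPath$ list. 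Your per-round ``$O(|Q|^2|\delta|)$ death-word search'' has no basis, and the final bound matching is accidental.
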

\noindent Define, for some $d = \langle q,s\rangle \in D$, its \emph{co-reachability} set $\Co(d) \subseteq D$:
it consists of those $e \in D$ such that there exists a word $w$ with $\{d,e\} \subseteq d \then w$.
Note that $\Co(d)$ is a fibre over~$s$. In the example of Figure~\ref{fig:product} we have that $\Co(\langle q_0, a\rangle) = \{\langle q_0, a\rangle,\langle q_2, a\rangle\}$, with $\{\langle q_0, a\rangle, \langle q_2, a\rangle\} \in \langle q_0, a\rangle \then ba$.
The following lemma (proof in Appendix~\ref{app:cuts}) gives a bound on the time to compute~$\Co(d)$:
\begin{restatable}{ourlemma}{lemcalccod}\label{lem:calccod}
One can compute $\Co(d)$ in time $O(|Q| |D| + |\delta| |T|)$.
Moreover, one can compute in time $O(|Q|^2 |D| + |\delta| |T|)$ a list $(\CoPath(d)(e))_{e \in \Co(d)}$ such that $\CoPath(d)(e) \in S^*$ and $\{d,e\} \subseteq d \then \CoPath(d)(e)$ and $|\CoPath(d)(e)| \le |Q| |D|$.
\end{restatable}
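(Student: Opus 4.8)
The plan is to compute $\Co(d)$ by a fixed-point iteration over fibres, mirroring the structure of Lemma~\ref{lem:get-larger}. Recall $d = \langle q,s\rangle$. I would maintain a single fibre $f \subseteq D$ over some state of $S$, together with a ``worklist'' of pairs $e \in f$ that still need to be processed, and a growing set $C \subseteq D$ (a fibre over $s$) of states already discovered to be co-reachable with $d$. Start with $C = \{d\}$ and, for each $e \in C$ not yet processed, try to grow $d \then w$ so that it additionally contains some new state. Concretely, I would do a breadth-first search in $B_{D,D}$ from the pair $\{d,e\}$ along pairs of simultaneous runs: the search space is pairs $(d',e') \in D \times D$ reachable by reading a common word, and an edge exists from $(d',e')$ to $(d'',e'')$ if there is $t \in S$ with $B_{d',d''} > 0$, $B_{e',e''} > 0$. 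Whenever we reach a configuration $\{d,e''\}$ with $d$ in the first coordinate (i.e.\ we have found a word $w$ with $\{d,e''\} \subseteq d\then w$), we add $e''$ to $C$ and to the worklist. Because being co-reachable is ``transitive along the runs from $d$'' — if $\{d,e\}\subseteq d\then v$ and $\{d,e'\}\subseteq e\then u$ with $e\then u$ joined appropriately — the standard fixed-point argument shows this terminates exactly when $C = \Co(d)$.

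For the complexity bound $O(|Q||D| + |\delta||T|)$: the first term accounts for initialising data structures indexed by $D$ and maintaining the characteristic vector of the current fibre; the dominant term is the pair-search. Here I would be careful to avoid iterating over all of $D\times D$. The key observation is that a fibre has at most one element per state of $Q$ in a given $S$-coordinate (since $D$ has no diamonds, after the diamond-removal assumption), so tracking a ``current element'' $e$ together with the reachable set in $B_{D,D}$ from $e$ costs only $O(|T|)$ time — a single BFS in $B_{D,D}$ — and we must also intersect, at each step, with ``where is $d$ reachable from''. The set of $d'$ co-reachable with $d$ via a common word is exactly what we are computing, so instead I would run the search as a product with the (already-known, incrementally) reachability structure: for each newly discovered $e$, a fresh BFS from $e$ in $B_{D,D}$ costs $O(|T|)$, and we perform at most $|Q|$ such searches (one per element of $\Co(d)$, of which there are at most $|Q|$ since $\Co(d)$ is a fibre over $s$), giving $O(|Q||T|)$, which is subsumed by $O(|\delta||T|)$ since $|Q| \le |\delta|$. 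Combined with the $O(|Q||D|)$ bookkeeping this yields the stated bound.

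For the second part — computing the paths $\CoPath(d)(e)$ with $|\CoPath(d)(e)| \le |Q||D|$ — I would have each BFS above also record, for each discovered pair, a predecessor pointer and the letter read, so that when $e$ is added to $C$ we can reconstruct a witness word $w_e$ with $\{d,e\}\subseteq e'\then w_e$ where $e'$ was the element of $C$ from whose BFS $e$ was found. Chaining these witnesses along the path in $C$ from $d$ to $e$ (the ``discovery tree'') gives $\CoPath(d)(e)$. Each individual BFS is over the pair-graph restricted to one coordinate, whose reachable part has $\le |D|$ vertices, so each witness segment has length $\le |D|$; there are at most $|Q|$ segments in the chain (depth of the discovery tree $\le |\Co(d)| \le |Q|$), giving the length bound $|\CoPath(d)(e)| \le |Q||D|$. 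Reconstructing and concatenating one path costs $O(|Q||D|)$, and doing this for all $e \in \Co(d)$ costs $O(|Q|^2|D|)$; the searches themselves still cost $O(|\delta||T|)$, for a total of $O(|Q|^2|D| + |\delta||T|)$ as claimed.

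The main obstacle I anticipate is making the product-search accounting tight: a naive product BFS over $D\times D$ would cost $O(|T|^2)$, which is too slow, so the argument must genuinely exploit that at each stage one coordinate ranges only over a fibre (hence over $\le |Q|$ states in a fixed $S$-component, or $\le |D|$ states total) and that the ``first coordinate'' is pinned to the discovery set $\Co(d)$, which is itself a fibre of size $\le |Q|$. Getting the bookkeeping right so that the total work is $\sum_{e \in \Co(d)} O(|T|) = O(|Q||T|)$ rather than $O(|D| |T|)$ or worse is the delicate point; everything else is routine graph search and path reconstruction.
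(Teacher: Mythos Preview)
Your approach has a genuine gap in the complexity accounting. The paper's proof is much simpler: it builds a single product graph $P$ with vertex set $V = \{(p,p',t) : \langle p,t\rangle, \langle p',t\rangle \in D\}$ and an edge from $(p,p',t)$ to $(r,r',t')$ exactly when both $\langle p,t\rangle \to \langle r,t'\rangle$ and $\langle p',t\rangle \to \langle r',t'\rangle$ are edges of $B_{D,D}$. One then has $|V| \le |Q|\,|D|$ (for each $\langle p',t\rangle \in D$ there are at most $|Q|$ choices of $p$) and $|A| \le |\delta|\,|T|$ (for each edge of $B_{D,D}$ there are at most $|\delta|$ parallel automaton transitions on the same letter). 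A \emph{single} BFS from $(q,q,s)$ in $P$ computes $\Co(d)$ in time $O(|V|+|A|) = O(|Q||D| + |\delta||T|)$, and the BFS tree directly gives shortest witness words of length $\le |V| \le |Q||D|$, each extractable in $O(|V|)$ time; doing this for the at most $|Q|$ elements of $\Co(d)$ yields the $O(|Q|^2|D|)$ term.

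Your iterative scheme does not achieve this. You write ``a fresh BFS from $e$ in $B_{D,D}$ costs $O(|T|)$'', but a BFS in $B_{D,D}$ alone cannot track synchronisation with the run from $d$: since $D$ is an SCC, every state is reachable from $e$, so such a search says nothing about co-reachability. What you actually need at each stage is a product search from the pair $(d,e)$, and that search may touch up to $|A| \le |\delta||T|$ edges, not $|T|$; doing $|Q|$ of them gives $O(|Q|\,|\delta|\,|T|)$, one factor of $|Q|$ too many. Likewise your segment-length bound of $|D|$ is unjustified: the reachable part of the \emph{product} from $(d,e)$ can have up to $|Q||D|$ vertices, so a single segment may already have length $|Q||D|$, and chaining $|Q|$ of them would overshoot. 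The remedy is exactly to run one global BFS in $P$ rather than $|Q|$ local ones; the remark about diamond-freeness is a red herring (fibres have at most $|Q|$ elements by definition, independently of diamonds).
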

\noindent The lemma is used in the proof of Proposition~\ref{prop:findcut}:
\begin{proof}[Proof sketch of Proposition~\ref{prop:findcut}]
Starting from a singleton fibre $\{d\}$, where $d = \langle q,s\rangle \in D$ is chosen arbitrarily, we keep looking for words $v \in S^*$ that have the properties described in Lemma \ref{lem:get-larger} to generate larger fibres $d\then w$:
\begin{enumerate}
\item $w := \varepsilon$ (the empty word)
\item while $\exists\, v \in S^*$ and $\exists\, e \ne d$ such that $d \then v \supseteq \{d, e\}$ and $e \then w \neq \emptyset:$ \\
\mbox{}\hspace{4mm} $w := v w$
\item return $d \then w$.
\end{enumerate}
By \cite[Lemma~18]{16BKKKMW-CAV} the algorithm returns a cut.
In every loop iteration the fibre $d \then w$ increases, so the loop terminates after at most $|Q|$ iterations.
For efficiency we calculate $\Co(d)$ and $\CoPath(d)$ using Lemma~\ref{lem:calccod}, and we use dynamic programming to maintain the set, $\Survives$, of those $e \in D$ for which $e \then w \ne \emptyset$ holds.
Whenever a prefix~$v$ is added to~$w$, we update~$\Survives$ by processing~$v$ backwards.
This leads to the following algorithm:
\begin{enumerate}
\item Calculate $\Co(d)$ and $\CoPath(d)$ using Lemma \ref{lem:calccod}
\item $w := \varepsilon;$ $\Survives := (Q \times \{s\}) \cap D$
\item while $\exists\, e \in \Co(d) \setminus \{d\}$ such that $e \in \Survives$: \\
\mbox{}\hspace{4mm} $v_0= s$; $v_1\ldots v_n := \CoPath(d)(e)$ \\
\mbox{}\hspace{4mm} for $i = n, n-1, \ldots, 1$: \\
\mbox{}\hspace{10mm}     $\Survives := \{\langle p,v_{i-1}\rangle \in D \mid (\delta(p, v_{i-1}) \times \{v_i\}) \cap \Survives \neq \emptyset\}$ \\
\mbox{}\hspace{4mm} $w := v_1\ldots v_nw$
\item return $d \then w$
\end{enumerate}
The runtime analysis is in Appendix~\ref{app:cuts}.
\end{proof}

\begin{example}\label{ex:cut}
Letting $d = \langle q_0,a \rangle$ and $e = \langle q_2,a \rangle$ we have $\Co(d) = \{d, e\}$ with $\CoPath(d)(d) = \varepsilon$ and $\CoPath(d)(e) = b a a$.
Initially we have $\Survives = Q \times \{a\}$.
In the first iteration the algorithm can only pick~$e$.
The inner loop updates $\Survives$ first to $\{q_0, q_1, q_2, q_3\} \times \{a\}$ (i.e., to itself), then to $\{q_1, q_2\} \times \{b\}$, and finally to $\{q_0,q_3\} \times \{a\}$.
Now $(\Co(d) \setminus d) \cap \Survives$ is empty and the loop terminates.
The algorithm returns the cut $d \then baa = \{d,e\}$.
\end{example}
Applying Proposition~\ref{prop:findcut} to the general procedure (Proposition~\ref{prop:mccomp}) leads to the following result on the combinatorial approach:
\begin{restatable}{ourtheorem}{thmPMCMCUBAcut}
  \label{thm:PMC-MC-UBA-cut}
  Given a Markov chain $\cM = (S,M)$, an initial distribution~$\iota$, and
  a UBA~$\cA = (Q,S,\delta,Q_0,F)$, one can compute $\Pr^{\cM}_\iota(\cL(\cA))$
  in time
$O(|Q|^\kappa |S|^\kappa + |Q|^3 |\delta| |S|+|\delta|^2 |E|)$.
\end{restatable}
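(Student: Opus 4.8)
The plan is to combine the general reduction of Proposition~\ref{prop:mccomp} with the cut-finding algorithm of Proposition~\ref{prop:findcut}. By Proposition~\ref{prop:mccomp} it suffices to bound $N$, the total time to compute a normaliser for every accepting recurrent SCC; everything else (building $B$, computing its SCCs, testing recurrence, and solving the linear system of Lemma~\ref{lem:linear-system}) already fits within $O(|Q|^\kappa |S|^\kappa)$. By Lemma~\ref{lem:cutnormalises}, for each accepting recurrent SCC $D$ a cut vector is a $D$-normaliser, so we may simply run the algorithm of Proposition~\ref{prop:findcut} once per $D \in \cD_+$. (If the input UBA has diamonds, we first apply Lemma~\ref{lem:diamondfree}, at cost $O(|\delta|^2|S|)$, which is dominated by the claimed bound since $|S| \le |E|$.)

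First I would sum the per-SCC cost. Proposition~\ref{prop:findcut} computes a cut in $D$ in time $O(|Q|^2 |\delta| |D| + |\delta| |T_D|)$, where $T_D$ is the set of edges of $B_{D,D}$. Since the SCCs of $B$ partition a subset of $Q \times S$, we have $\sum_{D \in \cD_+} |D| \le |Q||S|$, so the first term sums to $O(|Q|^3 |\delta| |S|)$. For the second term, the sets $T_D$ are pairwise disjoint subsets of the edge set of $B$, so $\sum_{D \in \cD_+} |T_D|$ is at most the number of edges of $B$.

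The one step needing a short argument is bounding the number of edges of $B$ by $|\delta|\,|E|$. By Equation~\eqref{eq:Bmat}, $(\langle q,s\rangle,\langle q',s'\rangle)$ is an edge of $B$ iff $q' \in \delta(q,s)$ and $M_{s,s'} > 0$. Fixing $(s,s') \in E$, the number of pairs $(q,q')$ with $q' \in \delta(q,s)$ equals $\sum_{q} |\delta(q,s)| \le |\delta|$, since $\{(q,q') : q' \in \delta(q,s)\}$ is contained in the transition set counted by $|\delta|$. Summing over the $|E|$ choices of $(s,s')$ gives at most $|\delta|\,|E|$ edges of $B$. Hence $\sum_{D \in \cD_+} |T_D| \le |\delta|\,|E|$, the second term sums to $O(|\delta|^2 |E|)$, and so $N = O(|Q|^3 |\delta| |S| + |\delta|^2 |E|)$. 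Plugging this into Proposition~\ref{prop:mccomp} yields the stated bound $O(|Q|^\kappa |S|^\kappa + |Q|^3 |\delta| |S| + |\delta|^2 |E|)$.

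I do not expect a serious obstacle: the result is essentially a bookkeeping combination of Propositions~\ref{prop:mccomp} and~\ref{prop:findcut} together with Lemma~\ref{lem:cutnormalises}, and the only mildly non-routine point is the edge-count estimate for $B$ (plus the sanity check that the diamond-removal preprocessing of Lemma~\ref{lem:diamondfree} is absorbed into the final bound).
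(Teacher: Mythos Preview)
Your proposal is correct and follows essentially the same route as the paper's proof: sum the per-SCC cost from Proposition~\ref{prop:findcut} using $\sum_D |D| \le |Q||S|$ and $\sum_D |T_D| \le |\delta||E|$, then plug into Proposition~\ref{prop:mccomp}. You even spell out the $|\delta||E|$ edge bound for~$B$ and the diamond-removal preprocessing, which the paper's proof simply asserts or assumes.
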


\subsection{Calculating \texorpdfstring{$D$}{D}-Normalisers Using Linear Algebra}
\label{sub:pseudo-cuts}

Recall that $D$ is an accepting recurrent SCC.
For $t \in S$ define the matrix $\Delta(t) \in \{0,1\}^{D \times D}$ as follows:
\begin{equation*}
\Delta(t)_{\langle q, s\rangle,\langle q', s'\rangle} := \left\{\begin{array}{ll}
1 & \quad\textrm{if $s' = t$, $M_{s,t} > 0$, and $q' \in \delta(q,s)$} \\
0 & \quad\textrm{otherwise}
\end{array}\right.
\end{equation*}
Note that the graph of~$\Delta(t)$ contains exactly the edges of the graph of~$B_{D,D}$ that end in vertices in $Q \times \{t\}$.
If $M_{s,t} > 0$ holds for all pairs $(s,t)$, then the matrices $(\Delta(t))_{t \in S}$ generate a semigroup of matrices, all of which have spectral radius~$1$.
Such semigroups were recently studied by Protasov and Voynov~\cite{Protasov17}.
Specifically, Theorem~5 in~\cite{Protasov17} shows that there exists an affine subspace $\mathcal{F}$ of $\mathbb{R}^{D}$ which excludes $\vec{0}$ and is invariant under multiplication by matrices from the semigroup.
Moreover, they provide a way to compute this affine subspace efficiently.
One can show that cut vectors are orthogonal to~$\mathcal{F}$.
The key idea of our contribution is to generalise cut vectors to \emph{pseudo-cuts}, which are vectors $\vec{\mu} \in \mathbb{R}^D$ that are orthogonal to~$\mathcal{F}$.
We will show (in Lemma~\ref{lem:colocal-pseudo-cut} below) how to derive a $D$-normaliser based on a pseudo-cut that is non-zero only in components that are in a co-reachability set~$\Co(d)$ (from Lemma~\ref{lem:calccod}).

If $M_{s,t} = 0$ holds for some $s,t$ (which will often be the case in model checking), then $\Delta(s) \Delta(t)$ is the zero matrix, which has spectral radius~$0$, not~$1$.
Therefore, the results of~\cite{Protasov17} are not directly applicable and we have to move away from matrix semigroups.
In the following we re-develop and generalise parts of the theory of~\cite{Protasov17} so that the paper is self-contained and products of $\Delta(s) \Delta(t)$ with $M_{s,t} = 0$ are not considered.

Let $w = s_1 s_2 \ldots s_n \in S^*$.
Define $\Delta(w) = \Delta(s_1) \Delta(s_2) \cdots \Delta(s_n)$.
We say $w$ is \emph{enabled} if $M_{s_i,s_{i+1}} > 0$ holds for all $i \in \{1,\ldots, n-1\}$.
If $f \subseteq D$ is a fibre over~$s$ such that $s w$ is enabled, we have $\cvec{f \then w}^\top = \cvec{f}^\top\Delta(w)$.
We overload the term \emph{fibre over $s$} to describe any vector $\vec{\mu} \in \mathbb{R}^D$ such that $\vec{\mu}_{\langle q,s'\rangle} = 0$ whenever $s' \neq s$.
We define \emph{pseudo-cuts over~$s$} to be fibres $\vec{\mu}$ over~$s$ such that $\vec{\mu}^\top\Delta(w)\vec{z} = \vec{\mu}^\top\vec{z}$ holds for all $w \in S^*$ such that $s w$ is enabled.
Let $c \subseteq Q \times \{s\}$ be a cut with $s w$ enabled.
Then $c \then w$ is a cut, and $\cvec{c}^\top \Delta(w) \vec{z} = 1 = \cvec{c}^\top \vec{z}$ holds by Lemma~\ref{lem:cutnormalises}.
It follows that cut vectors are pseudo-cuts.
\begin{example}\label{ex:cutvects}
Since $c = \{\langle q_0,a \rangle, \langle q_2,a \rangle\}$ from Example~\ref{ex:cut} is a cut, $\cvec{c}$ is a pseudo-cut over~$a$.
Pseudo-cuts do not need to be combinations of cut vectors:
although the fibre $f = \{\langle q_0,a\rangle, \langle q_1,a\rangle\}$ is not a cut, $\cvec{f}$ is a pseudo-cut over~$a$.
\end{example}

Fix some $d = \langle q,s \rangle \in D$.
Recall that $\Co(d)$ consists of those $e \in D$ such that there exists a word $w$ with $\{d,e\} \subseteq d \then w$.
We define \emph{$\Co(d)$-pseudo-cuts} to be pseudo-cuts $\vec{\mu}$ over~$s$ such that $\vec{\mu}_d \ne 0$ and $\vec{\mu}_e = 0$ holds for all $e \not \in \Co(d)$.

\begin{example}
Any cut vector is a $\Co(d)$-pseudo-cut for some $d \in D$, by definition, and so are scalar multiples of cut vectors. The vector $\cvec{f}$ in Example~\ref{ex:cutvects}, however, is not a $\Co(d)$-pseudo-cut, since $\langle q_1,a\rangle \not \in \Co(\langle q_0, a\rangle)$ and $\langle q_0,a\rangle \not \in \Co(\langle q_1, a\rangle)$.
\end{example}

From a $\Co(d)$-pseudo-cut we can easily derive a $D$-normaliser:
\begin{lemma}\label{lem:colocal-pseudo-cut}
Let $\vec{\mu} \in \mathbb{R}^D$ be a $\Co(d)$-pseudo-cut.
Then $\frac{1}{\vec{\mu}_d}\vec{\mu}$ is a $D$-normaliser.
\end{lemma}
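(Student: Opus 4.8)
The plan is to show that $\vec{\mu}$ is a pseudo-cut whose scalar product with $\vec{z}_D$ equals $1$ after scaling by $1/\vec{\mu}_d$; in fact I expect $\vec{\mu}^\top \vec{z}_D = \vec{\mu}_d$ directly. The rough strategy is to reduce the statement about the $\Co(d)$-pseudo-cut $\vec{\mu}$ to the corresponding statement about the characteristic vector $\cvec{\{d\}}$ of the singleton fibre (which trivially satisfies $\cvec{\{d\}}^\top \vec{z}_D = \vec{z}_d$), by finding a word that transports $\{d\}$ to a fibre that ``dominates'' the support of $\vec{\mu}$ and exploiting the pseudo-cut invariance $\vec{\mu}^\top \Delta(w)\vec{z} = \vec{\mu}^\top \vec{z}$.

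First I would use Lemma~\ref{lem:calccod} (or just the definition of $\Co(d)$) to obtain, for each $e \in \Co(d)$, a word $w_e$ enabled from $s$ with $\{d,e\} \subseteq d \then w_e$. The idea is that, since $D$ is a recurrent SCC, one can compose these words to find a single word $w$, enabled from $s$, such that $d \then w \supseteq \{d\} \cup \Co(d)$ — using the fact (cf.\ Lemma~\ref{lem:get-larger} and the cut-generation argument) that once a fibre contains $d$ it only grows along suitable prefixes, and $\Co(d)$ is finite. Actually a cleaner route: let $c$ be a cut containing $d$ (obtained from $d\then w$ for a suitable $w$ via Proposition~\ref{prop:findcut}'s construction); then $c \subseteq \Co(d)$, and $c$ is disjoint-support-complementary to nothing relevant here. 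Let me instead argue as follows. Take the cut $c$ with $d \in c$; by the construction underlying Proposition~\ref{prop:findcut}, $c$ arises as $d \then w$ for some enabled $w$, and every element of $c$ lies in $\Co(d)$. Then $\cvec{c} = \cvec{\{d\}}^\top \Delta(w)$ as row vectors, so $\cvec{c}^\top \vec{z}_D = \cvec{\{d\}}^\top \Delta(w)\vec{z}_D$. Hmm — but I need to connect $\vec{\mu}$ (whose support is all of $\Co(d)$, not just $c$) to $\vec{z}_D$, not $\cvec{c}$.

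The real argument should be: since $\vec{\mu}$ is a pseudo-cut over $s$ with $\vec{\mu}_d \neq 0$ and support inside $\Co(d)$, and the entries of $\vec{z}_D$ on $\Co(d)$ are all strictly positive (as $D \in \cD_+$, by \cite[Lemma~8]{baieunp1}), I want to show $\vec{\mu}^\top \vec{z}_D = \vec{\mu}_d \cdot (\text{something})$ — no: I want $\frac{1}{\vec{\mu}_d}\vec{\mu}^\top \vec{z}_D = 1$, i.e.\ $\vec{\mu}^\top \vec{z}_D = \vec{\mu}_d$. The key identity is that $\vec{z}_d = \vec{z}_e$ whenever $e \in \Co(d)$ and more precisely that the ``$D$-normaliser value'' is determined by the co-reachability structure: by an ergodicity/unambiguousness argument entirely analogous to Lemma~\ref{lem:cutnormalises}, any fibre $f$ over $s$ with $\{d\} \subseteq f \subseteq \Co(d)$ satisfies $\cvec{f}^\top \vec{z}_D = 1$ whenever... no, that is false in general (a cut needs the infinitary survival property). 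I would therefore prove the lemma by: (i) taking a cut $c \ni d$, so $\cvec{c}^\top \vec{z}_D = 1$ by Lemma~\ref{lem:cutnormalises}; (ii) observing that since $c$ is a cut reachable from $\{d\}$ via an enabled word $w$, and $\vec{\mu}$ is a pseudo-cut, $\vec{\mu}^\top \vec{z}_D = \vec{\mu}^\top \Delta(w) \vec{z}_D$; (iii) computing $\vec{\mu}^\top \Delta(w)$ and showing it is supported on $c$, with total mass (against $\vec{z}_D$) equal to $\vec{\mu}_d$ — this uses that $\Delta(w)$ restricted to the relevant rows acts like the transition of fibres, that $\vec{\mu}$'s support lies in $\Co(d)$, and that every element of $\Co(d)$ is transported by $w$ into the cut $c$ (which is exactly why we chose $w$ as a ``cut-generating'' word rather than an arbitrary one). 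Then since every state in $c$ lies in $\Co(d)$ and $\cvec{c}^\top \vec{z}_D = 1$, combining (i)--(iii) yields $\vec{\mu}^\top \vec{z}_D = \vec{\mu}_d$.

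The main obstacle I anticipate is step (iii): controlling where the support of $\vec{\mu}$ goes under $\Delta(w)$ and showing the coefficients collapse correctly. In particular, one must choose $w$ so that $d \then w$ is a cut \emph{and simultaneously} $e \then w \neq \emptyset$ for every $e \in \Co(d)$ with $\vec{\mu}_e \neq 0$ — and then argue, using disjointness of the sub-fibres $e \then w$ for distinct $e$ in a diamond-free UBA together with the pseudo-cut property applied state-by-state, that $\vec{\mu}^\top \Delta(w) \vec{z}_D$ telescopes to $\vec{\mu}_d$. Getting this ``single good word'' may require iterating the construction of Proposition~\ref{prop:findcut} started from $\{d\}$ but forced to also keep each relevant $e$ alive; alternatively one shows $e\then w \subseteq d \then w = c$ is automatic once $e \in \Co(d)$ and $c$ is a cut, because $d\then w \supseteq e \then w$ would follow from $\{d,e\}\subseteq d\then v$ and monotonicity — this is the crux and likely the place where Lemma~\ref{lem:get-larger} and the diamond-freeness assumption do the real work.
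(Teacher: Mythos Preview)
Your overall scaffolding is right: take a word $w$ with $d \then w$ a cut containing~$d$, use the pseudo-cut invariance to replace $\vec{\mu}^\top \vec{z}_D$ by $\vec{\mu}^\top \Delta(w) \vec{z}_D$, and then decompose $\vec{\mu} = \sum_{e \in \Co(d)} \vec{\mu}_e \cvec{e}$. The gap is entirely in your step~(iii), and your two proposed fixes both point in the wrong direction.

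First, you do \emph{not} want to choose $w$ so that $e \then w \neq \emptyset$ for all relevant~$e$. In fact the opposite is true: for every $e \in \Co(d)\setminus\{d\}$ and \emph{any} $w$ with $d \then w$ a cut, one has $\cvec{e}^\top \Delta(w)\vec{z}_D = 0$ (equivalently, since $\vec{z}_D>0$ on $D$, $e \then w = \emptyset$). This is the key fact that makes the sum collapse to $\vec{\mu}_d \cdot \cvec{d}^\top \Delta(w)\vec{z}_D = \vec{\mu}_d$. Second, your ``monotonicity'' alternative, deducing $e \then w \subseteq d \then w$ from $\{d,e\}\subseteq d\then v$, does not work: $v$ and $w$ are unrelated words, and there is no monotonicity of that kind.

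The missing argument (and the one the paper uses) is a sandwich that goes \emph{forward} from the cut rather than backward. Fix $e \in \Co(d)\setminus\{d\}$ and pick $w'$ with $\{d,e\}\subseteq d \then w'$. Since $d \then w$ is a cut, so are $d \then w w'$ and $d \then w w' w$; moreover $d \then w w' \supseteq \{d,e\}$ because $d \in d\then w$. By diamond-freeness the sets $d \then w$ and $e \then w$ are disjoint (two runs over $w w' w$ from $d$ to any common point would pass through $d\neq e$ at step $|ww'|$). Hence
\[
1 \;=\; \cvec{d \then w w' w}^\top \vec{z}_D \;\ge\; \cvec{d \then w}^\top \vec{z}_D + \cvec{e \then w}^\top \vec{z}_D \;=\; 1 + \cvec{e \then w}^\top \vec{z}_D,
\]
forcing $\cvec{e}^\top \Delta(w)\vec{z}_D = 0$. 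Plugging this into your decomposition finishes the proof. The point you were missing is that the auxiliary word $w'$ is applied \emph{after} the cut word~$w$ (so that every intermediate fibre is still a cut), not before it.
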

\begin{proof}
Let $w$ be an enabled word in~$M$ such that $d \then w$ is a cut containing~$d$.
Such a word exists (see the proof sketch of Proposition~\ref{prop:findcut}).
Since $(\cvec{d}^\top\Delta(w))^\top = \cvec{d \then w}$ is a $D$-normaliser (by Lemma~\ref{lem:cutnormalises}), it suffices to prove that $\frac{1}{\vec{\mu}_d}\vec{\mu}^\top\vec{z} = \cvec{d}^\top \Delta(w)\vec{z}$.

We can write $\vec{\mu}$ as $\sum_{d' \in \Co(d)} \vec{\mu}_{d'} \cvec{d'}$, so $\vec{\mu}^\top \Delta(w) = \sum_{d' \in \Co(d)} \vec{\mu}_{d'} \cvec{d'}^\top \Delta(w)$. For any $d' \in \Co(d) \setminus \{d\}$, let $w'$ be such that $\{d, d'\} \subseteq d \then w'$. Now we see that $d' \in d \then w w'$, and since $d \then w$ is a cut so are $d \then ww'$ and $d \then ww'w$. Thus,
\begin{equation*}
\cvec{d}^\top\Delta(w)\vec{z}  \ = \  \cvec{d}^\top\Delta(ww'w)\vec{z} \ \geq \ \cvec{d}^\top\Delta(w)\vec{z}+\cvec{d'}^\top\Delta(w)\vec{z},
\end{equation*}
which implies $\cvec{d'}^\top\Delta(w)\vec{z} = 0$ for every $d' \in \Co(d) \setminus \{d\}$. This means that
\begin{equation*}
\vec{\mu}^\top\Delta(w)\vec{z} = \sum_{d' \in \Co(d)}\vec{\mu}_{d'}\cvec{d'}^\top\Delta(w)\vec{z}  = \vec{\mu}_d\cvec{d}^\top\Delta(w)\vec{z}.
\end{equation*}
Since $\vec{\mu}$ is a pseudo-cut, this implies that $\frac{1}{\vec{\mu}_d}\vec{\mu}^\top\vec{z} = \frac{1}{\vec{\mu}_d}\vec{\mu}^\top\Delta(w)\vec{z} = \cvec{d}^\top\Delta(w)\vec{z}$.
\end{proof}
By Lemma~\ref{lem:colocal-pseudo-cut}, to find a $D$-normaliser it suffices to find a $\Co(d)$-pseudo-cut.
Fix a dominant eigenvector $\vec{y}$ of~$B_{D,D}$ so that $\vec{y}$ is strictly positive in all components.
One can compute such~$\vec{y}$ in time $O(|D|^\kappa)$.
By \cite[Lemma~8]{16BKKKMW-CAV} the vector $\vec{z}_{D}$ is also a dominant eigenvector of~$B_{D,D}$, hence $\vec{y}$ and $\vec{z}_{D}$ (the latter of which is yet unknown) are scalar multiples.
In order to compute a $\Co(d)$-pseudo-cut, we compute a basis for the space spanned by $\Delta(w)\vec{y}$ for all enabled words~$w$.
We use a technique similar to the one employed by Tzeng in \cite{tzen1992} for checking equivalence of probabilistic automata.
To make this more efficient, we compute separate basis vectors for each $s \in S$.
Define $\Delta'(t) \in \{0,1\}^{D \times D}$ as $\Delta'(t)_{\langle q_1,s_1\rangle,\langle q_2,s_2\rangle} = 1$ if $q_1=q_2$ and $s_1 = s_2 = t$ and 0 otherwise.
Note that $\Delta(s)\Delta'(s) = \Delta(s)$ holds for all $s \in S$.

\begin{restatable}{ourlemma}{lemcalcR}\label{lem:calcR}
Suppose $\vec{y} = B_{D,D} \vec{y}$ is given.
Denote by $V(s) \subseteq \mathbb{R}^D$ the vector space spanned by the vectors $\Delta'(s)\Delta(w)\vec{y}$ for $w \in S^*$ and $s \in S$.
Let $Q_{D, t} = (Q \times \{t\}) \cap D$ and let $E(t) = \{(s,t) \mid M_{s,t} > 0\}$ be the set of edges in~$M$ that end in~$t$.
One can compute a basis $R(s)$ of $V(s)$ for all $s \in S$ in time
{$O(|Q|^2 \sum_{t \in S} |Q_{D,t}||E(t)|)$}, where for each $\vec{r} \in R(s)$ we have $\vec{r} = \Delta'(s)\Delta(w)\vec{y}$ for some enabled word $s w$.
\end{restatable}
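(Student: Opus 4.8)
The plan is to adapt Tzeng's worklist/breadth-first algorithm for computing the span of reachable vectors under a set of matrices. We want, for each $s \in S$, a basis $R(s)$ of the space $V(s)$ spanned by $\Delta'(s)\Delta(w)\vec{y}$ over all $w \in S^*$. First I would observe the key seeding fact: since $\Delta(s)\Delta'(s) = \Delta(s)$, we have $\Delta'(s)\Delta(w)\vec{y} = \Delta'(s)\Delta(s_1)\Delta(s_2 w')\vec{y}$ when $w = s_1 w'$, and more importantly, decomposing $w = w' t$ with last letter $t$ gives $\Delta(w)\vec{y} = \Delta(w')\Delta(t)\vec{y} = \Delta(w')\Delta(t)\Delta'(t)\vec{y}$; so every vector in the combined span lands, after multiplication by $\Delta'(s)$, in $V(s)$, and the generating vectors are obtained by starting from $\Delta'(t)\vec{y}$ for each $t$ and repeatedly left-multiplying by transition matrices $\Delta(s')$ with an edge $(t,s')$ in~$M$. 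I would therefore run a BFS whose nodes are vectors $\Delta'(t)\Delta(w)\vec{y}$ for enabled $tw$: maintain for each $t \in S$ a partial basis $R(t)$ (initially empty) and a worklist; seed the worklist with $\vec{r}_0 = \Delta'(t)\vec{y}$ (corresponding to $w = \varepsilon$, trivially enabled) for every $t$; when popping a vector $\vec{r} = \Delta'(t)\Delta(w)\vec{y}$ with $tw$ enabled, for every edge $(t, t') \in E(t')$ compute $\vec{r}' := \Delta'(t')\Delta(t')\vec{r}$ (note $tt'w'$ reversed — careful with direction: we extend on the \emph{right} of $w$, i.e. we track $\Delta(w)\vec{y}$ and prepend? — actually since $\Delta'(s)$ is applied on the far left, the natural recursion is $\Delta'(s)\Delta(s_1\cdots s_n)\vec{y}$, and one extends by a new \emph{first} letter, so from the vector for word $w$ over $s$ we get the vector for word $s'w$; but the resulting projector on the left must then match the fibre of $s'$). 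Let me restate cleanly: track vectors $\vec{v}_w := \Delta(w)\vec{y}$; then $\Delta'(s)\vec{v}_w$ is nonzero only when $w$ starts with~$s$, and the set of all $\Delta'(s)\vec{v}_w$ over enabled $w$ starting with $s$ spans $V(s)$; from $\vec{v}_w$ with $w = s_1\cdots s_n$ and any $t$ with $M_{t,s_1}>0$ we get $\vec{v}_{tw} = \Delta(t)\vec{v}_w$. So the BFS tracks $\vec{v}_w$, seeded by $\vec{v}_{s} = \Delta(s)\vec{y}$ for each $s$ (these are the length-one words, all enabled), and each $\vec{v}_w$ with first letter $s$ contributes $\Delta'(s)\vec{v}_w$ to the candidate set for $R(s)$.

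The termination and basis-maintenance steps are standard: a vector popped from the worklist is tested (by Gaussian elimination / solving a linear system) for membership in the span currently recorded for its fibre index~$s$; if independent, add it to $R(s)$ and push all its one-letter extensions onto the worklist; if dependent, discard. Since each $R(s) \subseteq \mathbb{R}^D$ and the vectors sit in the coordinate subspace $\mathbb{R}^{Q_{D,s}}$, we have $|R(s)| \le |Q_{D,s}|$, so at most $\sum_s |Q_{D,s}| = |D|$ independent vectors are ever found, and each such discovery enqueues at most $|E|$ successors (one per edge of~$M$), bounding the number of worklist items. Correctness — that the algorithm really computes all of $V(s)$ — follows from the closure argument: the recorded spans are closed under left-multiplication by the enabled $\Delta(t)$'s (we enqueue every extension of every basis vector, and dependent extensions add nothing new to the span by linearity), and they contain the seeds, hence they contain every $\Delta'(s)\Delta(w)\vec{y}$ by induction on $|w|$. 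The "enabled word $sw$" annotation on each $\vec{r} \in R(s)$ is maintained trivially by recording, alongside each vector, the word that produced it (or just its length / a back-pointer, to keep it cheap).

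For the running time, the dominant cost is: for each of the $|Q_{D,t}|$-or-fewer basis vectors discovered in fibre~$t$, and for each of the $|E(t)|$ incoming edges $(s,t)$ used to extend it, we perform one sparse matrix–vector product $\Delta(s)\vec{v}$ plus one independence check. The matrix–vector product costs $O(|Q|^2)$ in the worst case (the graph of $\Delta(s)$ has up to $|\delta|$ edges, and $|\delta| \le |Q|^2$; since the target fibre is fixed this is really $O(|Q_{D,t}|\cdot|Q|)$ but $O(|Q|^2)$ suffices), and the independence check against $R(t)$ (size $\le |Q_{D,t}| \le |Q|$, vectors of support $\le |Q|$) also costs $O(|Q|^2)$; summing $O(|Q|^2)$ over all (basis vector, incoming edge) pairs gives $O\bigl(|Q|^2 \sum_{t \in S} |Q_{D,t}|\,|E(t)|\bigr)$, as claimed. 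The main obstacle, and the point most needing care, is getting the direction of extension and the role of $\Delta'$ exactly right so that the span one computes is genuinely $V(s)$ and not something smaller — concretely, verifying that restricting attention to \emph{enabled} words loses nothing (a disabled word gives $\Delta(w) = 0$ and contributes only the zero vector) and that seeding with all single letters, rather than with $\vec{y}$ itself, captures every $\Delta'(s)\Delta(w)\vec{y}$; the identity $\Delta(s)\Delta'(s) = \Delta(s)$ is exactly what makes the per-$s$ separation legitimate and should be invoked explicitly at that step.
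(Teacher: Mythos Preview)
Your overall strategy---a Tzeng-style worklist that maintains, for each $s$, a growing independent set $R(s)$ of fibre vectors, with extensions along enabled transitions and Gram--Schmidt for the independence test---is exactly the paper's approach, and your correctness sketch (closure of the spans under the enabled maps $\Delta'(s)\Delta(t)$, plus induction on~$|w|$) and runtime accounting are essentially right.

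There is, however, a genuine indexing error in your ``clean restatement'' that would make the algorithm compute the wrong spaces. You write that $\Delta'(s)\vec v_w$ is nonzero only when $w$ \emph{starts with}~$s$, and therefore assign the candidate vector to $R(s)$ where $s$ is the first letter of~$w$. This is false: by definition $\Delta(t)$ has nonzero columns only in fibre~$t$ but nonzero \emph{rows} in every fibre $s$ with $M_{s,t}>0$, so for $w=t_1\cdots t_n$ the projection $\Delta'(s)\Delta(w)\vec y$ is potentially nonzero for \emph{every} $s$ with $M_{s,t_1}>0$, not just for $s=t_1$. Consequently your restatement would, for each~$s$, only generate $\{\Delta'(s)\Delta(sw')\vec y : sw'\ \text{enabled}\}$, which in general is a strict subspace of $V(s)=\mathrm{span}\{\Delta'(s)\Delta(w)\vec y : sw\ \text{enabled}\}$; and seeding with the length-one vectors $\Delta(s)\vec y$ also drops the $w=\varepsilon$ generator $\Delta'(s)\vec y$.

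The fix is simple and is what your first (pre-``restatement'') description was groping toward: track the pair $(s,\vec u)$ with $\vec u=\Delta'(s)\Delta(w)\vec y$ directly; seed with $(s,\Delta'(s)\vec y)$ for every~$s$; and from $(t,\vec u)$ extend to $(s,\Delta'(s)\Delta(t)\vec u)$ for each $s$ with $M_{s,t}>0$, using $\Delta(t)\Delta'(t)=\Delta(t)$ so that $\Delta'(s)\Delta(t)\vec u=\Delta'(s)\Delta(tw)\vec y$. Now the fibre index~$s$ and the first letter~$t$ of the extended word are different, and the ``incoming edges $(s,t)\in E(t)$'' in your runtime analysis line up correctly with the extension step. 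With this correction your closure argument and the $O(|Q|^2\sum_t |Q_{D,t}|\,|E(t)|)$ bound go through as you wrote them.
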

\begin{proof}[Proof sketch]
Fix an arbitrary total order $<_S$ on $S$.
We define a total order $\ll_S$ on $S^*$ as the ``shortlex'' order but with words read from right to left.
That is, the empty word~$\varepsilon$ is the smallest element, and for $v, w \in S^*$ and $s,t \in S$, we have $v s \ll_S w t$ if (1) $|v s| < |w t|$ or (2) $|vs| = |w t|$ and $s <_S t$ or (3) $|v s| = |w t|$ and $s = t$ and $v \ll_S w$.

We use a technique similar to the one by Tzeng in~\cite{tzen1992}.
At every step in the algorithm, $\unvisited$ is a set of pairs $(s w, \Delta'(s) \Delta(w) \vec{y})$.
We write $\min_{\ll_S}(\unvisited)$ to denote the pair in $\unvisited$ where $s w$ is minimal with respect to~$\mathord{\ll_S}$.
\begin{enumerate}
\item for each $s \in S$, let $R(s) := \{\Delta'(s)\vec{y}\}$ and $R(s)_\bot := \{\Delta'(s)\vec{y}\}$
\item $\unvisited := \{(s t,\Delta'(s)\Delta(t)\vec{y}) \mid M_{s,t} > 0\}$
\item while $\unvisited \neq \emptyset$: \\
\mbox{}\hspace{4mm} $(t w,\vec{u}) := \min_{\ll_S}(\unvisited)$; $\unvisited := \unvisited \setminus \{(t w,\vec{u})\}$ \\
\mbox{}\hspace{4mm} Using the Gram-Schmidt process\footnote{For good numerical stability, one should use the so-called Modified Gram-Schmidt process~\cite[Chapter~19]{Higham}.}, let $\vec{u}_\bot$ be the orthogonalisation of $\vec{u}$ against $R_\bot(t)$ \\
\mbox{}\hspace{4mm} if $\vec{u}_\bot \neq \vec{0}$, i.e., if $\vec{u}$ is linearly independent of $R_\bot(t)$: \\
\mbox{}\hspace{10mm}     $R(t) := R(t) \cup \{\vec{u}\}$ and $R_\bot(t) := R_\bot(t) \cup \{\vec{u}_\bot\}$ \\
\mbox{}\hspace{10mm}     $\unvisited := \unvisited \cup \{(s t w, \Delta'(s)\Delta(t)\vec{u}) \mid M_{s,t} > 0\}$
\item return $R(s)$ for all $s \in S$
\end{enumerate}
At any point and for all $s \in S$, the sets $R(s)$ and $R(s)_\bot$ span the same vector space, and this space is a subspace of~$V(s)$.
The sets $R(s)$ and $R(s)_\bot$ consist of linearly independent fibres over $s$, and these fibres are possibly nonzero only in the $Q_{D,s}$-components.
Hence $|\bigcup_{s \in S} R(s)| \leq |D|$ and thus there are at most $|D|$ iterations of the while loop that increase $\unvisited$.
At every iteration where $\vec{u}$ is dependent on $R(t)_\bot$ the set $\unvisited$ decreases by one, and therefore the algorithm terminates.
In~\cite[Appendix B.3]{kief2019} we prove that in the end we have that $R(s)$ spans~$V(s)$, and we analyse the runtime.
\end{proof}

\begin{example}\label{ex:spanex}
Let us return to our running example. We see that the vector $\vec{y} = (\vec{y}_{\langle q_0,a \rangle},\vec{y}_{\langle q_1,a \rangle},\vec{y}_{\langle q_1,b \rangle},\vec{y}_{\langle q_2,a \rangle},\vec{y}_{\langle q_2,b \rangle},\vec{y}_{\langle q_3,a \rangle})^\top = (2,1,3,1,3,2)^\top$ is a dominant eigenvector of $B_{D,D}$.
Fix the order $a <_S b$.
Step~1 initialises $R(a)$ to $\{\Delta'(a) \vec{y}\}$ and $R(b)$ to $\{\Delta'(b) \vec{y}\}$, where $\Delta'(a) \vec{y} = (2,1,0,1,0,2)^\top$ and $\Delta'(b) \vec{y} = (0,0,3,0,3,0)^\top$.
Step~2 computes $\Delta'(a)\Delta(a)\vec{y} = (1,2,0,2,0,1)^\top$, which is linearly independent of $\Delta'(a)\vec{y}$.
However, $\Delta'(b)\Delta(a)\vec{y} = (0,0,3,0,3,0)^\top = \Delta'(b)\vec{y}$.
Also, $\Delta'(a)\Delta(b)\vec{y} = (3,0,0,0,0,3)^\top = 2\Delta'(a)\vec{y}-\Delta'(a)\Delta(a)\vec{y}$ and $\Delta'(b)\Delta(b)\vec{y} = (0,0,3,0,3,0)^\top = \Delta'(b)\vec{y}$.
One can check that $\Delta'(a) \Delta(a a) \vec{y} = \Delta'(a) \vec{y}$ and $\Delta'(b) \Delta(a a) \vec{y} = \Delta'(b) \vec{y}$.
Hence the algorithm returns $R(a) = \{(2,1,0,1,0,2)^\top,(1,2,0,2,0,1)^\top\}$ and $R(b) = \{(0,0,3,0,3,0)^\top\}$.
\end{example}

Fix $d = \langle q,s\rangle \in D$ for the rest of the paper.
The following lemma characterises $\Co(d)$-pseudo-cuts in a way that is efficiently computable:
\begin{lemma}\label{lem:nullspclpc}
A vector $\vec{\mu} \in \Real^D$ with $\vec{\mu}_d = 1$ and $\vec{\mu}_e = 0$ for all $e \not \in \Co(d)$ is a $\Co(d)$-pseudo-cut if and only if $\vec{\mu}^\top \vec{r} = \vec{\mu}^\top \vec{y}$ holds for all $\vec{r} \in R(s)$.
\end{lemma}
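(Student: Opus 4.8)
The plan is to reformulate ``$\vec{\mu}$ is a $\Co(d)$-pseudo-cut'' as the statement that the single linear functional $\vec{\mu}^\top$ is affinely constant on the (generally infinite) family generating $V(s)$, and then to cut this down to the finite check against the basis $R(s)$ by a dimension count that uses the existence of an honest cut over~$s$.

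First I would record the normalisations. Since $\Co(d)$ is a fibre over~$s$, the hypotheses $\vec{\mu}_d = 1$ and $\vec{\mu}_e = 0$ for $e \notin \Co(d)$ make $\vec{\mu}$ a fibre over~$s$ with $\vec{\mu}_d \ne 0$, so $\vec{\mu}$ is a $\Co(d)$-pseudo-cut iff it is a pseudo-cut over~$s$; being a fibre over~$s$ also gives $\vec{\mu}^\top\Delta'(s) = \vec{\mu}^\top$. By \cite[Lemma~8]{16BKKKMW-CAV} the vectors $\vec{z}_D$ and $\vec{y}$ are scalar multiples, and both are strictly positive (as $D$ is accepting recurrent and $\vec{y}$ is dominant for the strongly connected matrix $B_{D,D}$), so $\vec{z}_D = \lambda\vec{y}$ with $\lambda > 0$. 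Writing $W(s) := \{\Delta'(s)\Delta(w)\vec{y} \mid w \in S^*,\ sw \text{ enabled}\}$ and $\vec{r}_0 := \Delta'(s)\vec{y}$, one has $\vec{r}_0 \in W(s)$ (take $w = \varepsilon$), $\vec{r}_0 \in R(s)$, $W(s) \subseteq V(s)$, and $R(s) \subseteq W(s)$ by the ``moreover'' clause of Lemma~\ref{lem:calcR}. Substituting $\vec{z}_D = \lambda\vec{y}$ and $\vec{\mu}^\top = \vec{\mu}^\top\Delta'(s)$ into the defining condition $\vec{\mu}^\top\Delta(w)\vec{z}_D = \vec{\mu}^\top\vec{z}_D$ (required for all enabled $sw$) shows that $\vec{\mu}$ is a pseudo-cut over~$s$ iff $\vec{\mu}^\top v = \vec{\mu}^\top\vec{r}_0$ for every $v \in W(s)$, i.e.\ iff $\vec{\mu}^\top$ is constant on $W(s)$ (the common value then being $\vec{\mu}^\top\vec{r}_0 = \vec{\mu}^\top\vec{y}$).

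Given this reformulation, the direction ``pseudo-cut $\Rightarrow$ condition on $R(s)$'' is immediate from $R(s) \subseteq W(s)$ and $\vec{r}_0 \in R(s)$. For the converse I would assume $\vec{\mu}^\top\vec{r} = \vec{\mu}^\top\vec{y}$ for all $\vec{r} \in R(s)$ and let $L_s$ denote the direction space of the affine hull of $W(s)$, taking $\vec{r}_0 \in W(s)$ as base point; then ``$\vec{\mu}^\top$ constant on $W(s)$'' is equivalent to ``$\vec{\mu}^\top$ vanishes on $L_s$'', and I already have $\vec{\mu}^\top(\vec{r} - \vec{r}_0) = 0$ for all $\vec{r} \in R(s)$, with $\vec{r} - \vec{r}_0 \in L_s$. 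So it remains to show the differences $\{\vec{r} - \vec{r}_0 \mid \vec{r} \in R(s)\}$ span $L_s$. This is the one point that needs an argument, since a basis of $V(s)$ is one vector ``too large'' for it to be automatic. The remedy is a cut: by the proof of Lemma~\ref{lem:colocal-pseudo-cut} there is an enabled word $sw_0$ with $d \then w_0$ a cut containing~$d$, so (as it contains $d = \langle q,s\rangle$, the fibre $d \then w_0$ is over~$s$) $c := d \then w_0$ is a cut with $c \subseteq Q \times \{s\}$. For any enabled $sw$, $c \then w$ is again a cut, so Lemma~\ref{lem:cutnormalises} gives $\cvec{c \then w}^\top\vec{z}_D = 1$; combining this with $\cvec{c \then w}^\top = \cvec{c}^\top\Delta(w)$, with $\vec{z}_D = \lambda\vec{y}$, and with $\cvec{c}^\top\Delta'(s) = \cvec{c}^\top$ yields $\cvec{c}^\top\bigl(\Delta'(s)\Delta(w)\vec{y}\bigr) = 1/\lambda$. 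Thus $\cvec{c}^\top$ is constantly $1/\lambda \ne 0$ on $W(s)$; since $\vec{0} \in V(s)$ with $\cvec{c}^\top\vec{0} = 0$, the affine hull of $W(s)$ lies in a proper affine hyperplane of $V(s)$, so $\dim L_s \le \dim V(s) - 1$. On the other hand $\operatorname{span}(R(s)) = V(s)$ gives $V(s) = \Real\,\vec{r}_0 + \operatorname{span}\{\vec{r} - \vec{r}_0 \mid \vec{r} \in R(s)\}$, hence $\dim\operatorname{span}\{\vec{r} - \vec{r}_0 \mid \vec{r} \in R(s)\} \ge \dim V(s) - 1$; as this span sits inside $L_s$, it must equal $L_s$. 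Therefore $\vec{\mu}^\top$ vanishes on $L_s$, so $\vec{\mu}$ is a pseudo-cut over~$s$ and hence a $\Co(d)$-pseudo-cut.

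The only genuine obstacle is this converse direction, specifically the fact that ``affinely constant on a basis of $V(s)$'' does not by itself imply ``affinely constant on all of $W(s)$'': one has to establish that the affine hull of $W(s)$ has codimension one in $V(s)$, and that is exactly what the characteristic vector of a cut over~$s$ — which is itself a pseudo-cut — certifies, by confining $W(s)$ to a hyperplane that misses the origin and so accounts for the extra basis vector $\vec{r}_0 = \Delta'(s)\vec{y}$. Everything else is routine manipulation of the identities $\vec{\mu}^\top\Delta'(s) = \vec{\mu}^\top$, $\vec{z}_D = \lambda\vec{y}$, and $\cvec{c \then w}^\top = \cvec{c}^\top\Delta(w)$.
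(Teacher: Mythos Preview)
Your proof is correct and follows essentially the same approach as the paper. The paper isolates the key step as a separate Lemma~\ref{lem:affine}, which states that for every enabled $sw$ the vector $\Delta'(s)\Delta(w)\vec{y}$ is an \emph{affine} (not merely linear) combination of~$R(s)$; it proves this by applying $\cvec{c}^\top$ to the linear expansion and using $\cvec{c}^\top\vec{r} = \cvec{c}^\top\vec{y}$ for all $\vec{r}\in R(s)$ to force $\sum\gamma_{\vec{r}}=1$. Your dimension count is the same argument in geometric clothing: the cut vector $\cvec{c}$ confines $W(s)$ to an affine hyperplane of $V(s)$ missing the origin, so the direction space~$L_s$ has codimension exactly one, and the differences $\vec{r}-\vec{r}_0$ already span it. Both packagings hinge on the identical use of a cut over~$s$ to bridge the gap between ``constant on the finite basis $R(s)$'' and ``constant on all of $W(s)$''.
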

For an intuition of the proof, consider the affine space, $\mathcal{F} \subseteq \Real^D$, affinely spanned by those $\Delta'(s) \Delta(w) \vec{y}$ for which $s w$ is enabled.
This affine space was alluded to in the beginning of this subsection and is visualised as a blue straight line on the right of Figure~\ref{fig:automaton}.
The shaded plane in this figure is the vector space of pseudo-cuts over~$s$.
This space is orthogonal to~$\mathcal{F}$.
The following lemma says that $\mathcal{F}$ is affinely spanned by the points in~$R(s)$.
This strengthens the property of~$R(s)$ in Lemma~\ref{lem:calcR} where $R(s)$ was defined to span a \emph{vector} space.
\begin{lemma} \label{lem:affine}
Let $w \in S^*$ be such that $s w$ is enabled.
By the definition of~$R(s)$ there are $\gamma_{\vec{r}} \in \Real^D$ for each $\vec{r} \in R(s)$ such that $\Delta'(s) \Delta(w) \vec{y} = \sum_{\vec{r} \in R(s)} \gamma_{\vec{r}} \vec{r}$.
We have $\sum_{\vec{r} \in R(s)} \gamma_{\vec{r}} = 1$.
\end{lemma}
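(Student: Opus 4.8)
The plan is to exploit the fact that $\vec y$ and every matrix $\Delta(t)$ have a uniform "row-sum" invariant along enabled words. Concretely, observe that for a vector of the form $\vec v = \Delta'(s)\Delta(u)\vec y$ with $su$ enabled, the quantity $\mathbf{1}^\top \vec v = \cvec{Q_{D,s}}^\top \vec v$ is exactly $\cvec{\{d\}\then u}^\top\vec y$ summed over a suitable fibre, or more cleanly: applying $\Delta(t)$ to a fibre vector and then restricting by $\Delta'(t)$ preserves the total mass tested against $\vec y$. The cleanest route is to track the scalar $\langle \vec c, \vec y\rangle$ where $\vec c$ ranges over characteristic vectors of fibres: if $f$ is a fibre over $s$ and $st$ is enabled, then because $B_{D,D}\vec y=\vec y$ and the graph of $\Delta(t)$ contains exactly the $B_{D,D}$-edges into $Q\times\{t\}$, one gets $\cvec{f\then t}^\top\vec y = \cvec f^\top \Delta(t)\vec y$, and — crucially — there is no loss of mass because $\vec y>0$ and $B_{D,D}\vec y = \vec y$ forces the incoming mass at each vertex of $Q\times\{t\}$ to match. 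Wait, that last equality is not mass-preserving in general; rather, I would argue directly on the representation.

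First I would set up notation: by Lemma~\ref{lem:calcR}, each $\vec r\in R(s)$ has the form $\vec r = \Delta'(s)\Delta(w_{\vec r})\vec y$ for some enabled word $sw_{\vec r}$, and $R(s)$ is a basis of $V(s)$. I want to show that the affine hull of $R(s)$ contains every $\Delta'(s)\Delta(w)\vec y$ with $sw$ enabled, equivalently that the coefficient vector summing these basis elements always sums to $1$. The key step is to exhibit a linear functional $\phi:\Real^D\to\Real$ that (i) evaluates to the same nonzero constant on $\Delta'(s)\Delta(w)\vec y$ for every enabled $sw$ (including $w=\varepsilon$), and (ii) is linear, so it passes through the basis expansion. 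Given such $\phi$ with $\phi(\Delta'(s)\Delta(w)\vec y)=\lambda\neq 0$ for all enabled $sw$, applying $\phi$ to $\Delta'(s)\Delta(w)\vec y = \sum_{\vec r} \gamma_{\vec r}\vec r$ gives $\lambda = \sum_{\vec r}\gamma_{\vec r}\phi(\vec r) = \lambda\sum_{\vec r}\gamma_{\vec r}$, hence $\sum_{\vec r}\gamma_{\vec r}=1$. The natural candidate is $\phi(\vec v) = \cvec{Q_{D,s}}^\top \vec v = \mathbf 1^\top \vec v$ restricted to the $s$-fibre — but I must check it is constant. For this I use that $\Delta'(s)\Delta(w)\vec y$ is a nonnegative combination reflecting paths; more robustly, I would instead take $\phi(\vec v) = \vec\xi^\top \vec v$ where $\vec\xi$ is chosen as (a restriction of) a left dominant eigenvector, or simply track the invariant $\cvec{f}^\top \vec z$ using Lemma~\ref{lem:cutnormalises}-style reasoning: since $\vec y$ is a scalar multiple of $\vec z_D$, and for a fibre $f$ over $s$ with $sw$ enabled we have $\cvec{f\then w}^\top \vec z = \cvec{f}^\top\Delta(w)\vec z$, and applying this to the generating steps shows the relevant scalar is preserved.

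The cleanest self-contained argument: define $\psi(\vec v) := \vec v^\top \vec{1}$ for fibres and show by induction on the construction in Lemma~\ref{lem:calcR} that every generated pair $(tw,\vec u)$ with $tw$ enabled satisfies $\vec u^\top\vec 1 = \vec y^\top \cvec{Q_{D,s}}$ — this uses that $\Delta'(t)$ only zeroes out components outside $Q\times\{t\}$ while $\Delta(t)$ redistributes according to $\delta$, and the no-diamonds assumption ensures each target state receives mass from a set of sources that, under $\vec y = B_{D,D}\vec y$, reconstitutes exactly $\vec y$'s value there. Honestly, I expect the main obstacle to be precisely this mass-preservation claim: it is \emph{not} true that $\cvec f^\top\Delta(t)\vec y = \cvec{f\then t}^\top\vec y$ for arbitrary $f$, only for \emph{cuts} (where unambiguousness and $\vec z_D$-proportionality of $\vec y$ make the union-of-disjoint-sets argument work). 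So the real proof must route through the fact that the words $w$ appearing here are built by iterating from $d\then(\text{enabled prefixes})$, and one restricts attention to coefficient sums rather than raw masses. I would therefore prove Lemma~\ref{lem:affine} by showing that $R(s)\cup\{\vec 0\}$-translated, i.e.\ the differences $\vec r - \vec r_0$ for a fixed $\vec r_0\in R(s)$, together with the generators $\Delta'(s)\Delta(w)\vec y - \vec r_0$, all lie in a common hyperplane through the origin — namely the orthogonal complement of the pseudo-cut $\vec\mu$ guaranteed to exist — and that this hyperplane genuinely has codimension one intersected with $V(s)$; the bookkeeping that the Gram–Schmidt construction never "escapes" the affine hull is the part requiring care, and I would handle it by a straightforward induction mirroring the while-loop of Lemma~\ref{lem:calcR}, using at each step that $\Delta'(s)\Delta(t)$ maps the affine hull of previously-seen vectors into the affine hull of newly-seen ones.
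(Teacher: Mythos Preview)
Your high-level strategy is exactly right and matches the paper: find a linear functional $\phi$ that takes the same nonzero value on every $\Delta'(s)\Delta(w)\vec y$ with $sw$ enabled, then apply $\phi$ to the basis expansion to force $\sum_{\vec r}\gamma_{\vec r}=1$. The gap is that you never commit to a functional that actually works. You try $\vec 1^\top$, correctly doubt it, mention a left eigenvector without pursuing it, and then propose an induction along the while-loop that you yourself recognise breaks because the step $\cvec{f}^\top\Delta(t)\vec y \mapsto \cvec{f\then t}^\top\vec y$ does not preserve the value for arbitrary fibres~$f$. Your final fallback, taking $\phi=\vec\mu^\top$ for ``the pseudo-cut $\vec\mu$ guaranteed to exist'', is circular: Lemma~\ref{lem:affine} is precisely what is used (via Lemma~\ref{lem:nullspclpc}) to show that such a $\vec\mu$ can be computed, and the only pseudo-cuts you know to exist independently of this lemma are cut vectors.

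The paper simply takes $\phi(\vec v)=\cvec{c}^\top\vec v$ for a cut $c$ over~$s$ containing~$d$. You essentially wrote down the reason this works---``only for \emph{cuts}'' does the invariance hold---but then walked away from it. Concretely: each $\vec r\in R(s)$ equals $\Delta'(s)\Delta(w_{\vec r})\vec y$ for some enabled $sw_{\vec r}$, so $\cvec{c}^\top\vec r=\cvec{c}^\top\Delta(w_{\vec r})\vec y=\cvec{c\then w_{\vec r}}^\top\vec y$, and since $c\then w_{\vec r}$ is again a cut, Lemma~\ref{lem:cutnormalises} (together with $\vec y\propto\vec z_D$) gives $\cvec{c\then w_{\vec r}}^\top\vec y=\cvec{c}^\top\vec y$. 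The same computation applies to $\Delta'(s)\Delta(w)\vec y$, and dividing through by the nonzero constant $\cvec{c}^\top\vec y$ finishes the proof in three lines. No induction on the Gram--Schmidt loop is needed.
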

\begin{proof}
Let $c$ be a cut containing~$d$.
Since $R(s)$ is a basis, for any $\vec{r} = \Delta'(s)\Delta(w_{\vec{r}})\vec{y} \in R(s)$ the word $s w_{\vec{r}}$ is enabled.
Therefore, $c \then w_{\vec{r}}$ is a cut and by Lemma~\ref{lem:cutnormalises} we have $\cvec{c \then w_{\vec{r}}}^\top \vec{y} = \cvec{c}^\top\vec{y}$.
Hence $\cvec{c}^\top \vec{r} = \cvec{c}^\top \Delta'(s) \Delta(w_{\vec{r}}) \vec{y} = \cvec{c}^\top \Delta(w_{\vec{r}}) \vec{y} = \cvec{c \then w_{\vec{r}}}^\top \vec{y} = \cvec{c}^\top\vec{y}$.
Moreover, we have:
\begin{align*}
\cvec{c}^\top\vec{y} & = \cvec{c}^\top\Delta(w)\vec{y} &&\text{since $s w$ is enabled and by Lemma~\ref{lem:cutnormalises}} & \\
& = \cvec{c}^\top \Delta'(s)\Delta(w)\vec{y}  &&\text{since $\cvec{c}$ is a fibre over $s$} \\
& = \cvec{c}^\top \sum_{\vec{r} \in R(s)} \gamma_{\vec{r}} \vec{r} &&\text{by the definition of $\gamma_{\vec{r}}$}\\
& = \cvec{c}^\top\vec{y} \sum_{\vec{r} \in R(s)}\gamma_{\vec{r}} &&\text{as argued above.}
\end{align*}
Therefore, $\sum_{\vec{r} \in R(s)}\gamma_{\vec{r}} = 1$.
\end{proof}
Now we can prove Lemma~\ref{lem:nullspclpc}:
\begin{proof}[Proof of Lemma~\ref{lem:nullspclpc}]
For the ``if'' direction, let $w$ be such that $sw$ is enabled, and it suffices to show that $\vec{\mu}^\top \Delta(w) \vec{y} = \vec{\mu}^\top \vec{y}$.
By Lemma~\ref{lem:affine} there are $\gamma_{\vec{r}}$ such that $\Delta'(s) \Delta(w) \vec{y} = \sum_{\vec{r} \in R(s)} \gamma_{\vec{r}} \vec{r}$ and $\sum_{\vec{r} \in R(s)} \gamma_{\vec{r}} = 1$.
We have:
\begin{equation*}
\vec{\mu}^\top \Delta(w) \vec{y} \ = \ \vec{\mu}^\top \Delta'(s)\Delta(w)\vec{y} \ = \ \sum_{\vec{r} \in R(s)} \gamma_{\vec{r}} \vec{\mu}^\top \vec{r} \ = \ \sum_{\vec{r} \in R(s)} \gamma_{\vec{r}} \vec{\mu}^\top \vec{y} \ = \ \vec{\mu}^\top \vec{y}\;,
\end{equation*}
where the last equality is from Lemma~\ref{lem:affine}.

For the ``only if'' direction, suppose $\vec{\mu}$ is a $\Co(d)$-pseudo-cut.
Let $\vec{r} = \Delta'(s)\Delta(w_{\vec{r}})\vec{y} \in R(s)$.
Then $s w_{\vec{r}}$ is enabled and $\vec{\mu}^\top \vec{r} = \vec{\mu}^\top\Delta'(s)\Delta(w_{\vec{r}})\vec{y} = \vec{\mu}^\top \Delta(w_{\vec{r}})\vec{y} = \vec{\mu}^\top \vec{y}$.
\end{proof}
\begin{example}
In Example~\ref{ex:cut} we derived that $\vec{y} = (2,1,3,1,3,2)^\top$ and $R(a) = \{(2,1,0,1,0,2)^\top,(1,2,0,2,0,1)^\top\}$.
The cut vector $\vec{\mu} = (1,0,0,1,0,0)^\top$ from Example~\ref{ex:cutvects} satisfies $\vec{\mu}^\top \vec{r} =  3 = \vec{\mu}^\top \vec{y}$ for both $\vec{r} \in R(a)$.
\end{example}

Using Lemmas \ref{lem:calccod}, \ref{lem:calcR} and~\ref{lem:nullspclpc} we obtain:
\begin{restatable}{ourproposition}{propfindcodcut}\label{prop:findcodcut}
Let $D \subseteq Q \times S$ be a recurrent SCC.
Denote by $T_D$ the set of edges of~$B_{D,D}$.
For $t \in S$, let~$E(t)$ denote the set of edges of~$M$ that end in $t$, and let $Q_{D,t} = (Q \times \{t\}) \cap D$.
Let $d = \langle q,s\rangle \in D$.
One can compute a $\Co(d)$-pseudo-cut in time 
$O(|D|^\kappa + |Q||D|+|\delta||T_D| + |Q|^2 \sum_{t \in S} |Q_{D,t}||E(t)|)$.
\end{restatable}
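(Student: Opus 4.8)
The plan is to produce a $\Co(d)$-pseudo-cut directly from the characterisation in Lemma~\ref{lem:nullspclpc}, after precomputing the data it refers to. Concretely, I would proceed in four steps. (i) Compute a strictly positive dominant eigenvector $\vec{y}$ with $B_{D,D}\vec{y}=\vec{y}$; since $B_{D,D}$ is nonnegative, strongly connected and recurrent, its dominant eigenvalue is $1$ with a one-dimensional eigenspace, so this is a single linear-algebra computation costing $O(|D|^\kappa)$. (ii) Compute $\Co(d)$ via the first part of Lemma~\ref{lem:calccod} in time $O(|Q||D|+|\delta||T_D|)$ — note that $\CoPath(d)$ is not needed here. (iii) Compute the bases $R(s)$ for all $s\in S$ via Lemma~\ref{lem:calcR} in time $O(|Q|^2\sum_{t\in S}|Q_{D,t}||E(t)|)$. (iv) Writing $d=\langle q,s\rangle$, solve (e.g.\ via the Moore-Penrose pseudo-inverse) the linear system for $\vec{\mu}\in\Real^D$ consisting of $\vec{\mu}_d=1$, of $\vec{\mu}_e=0$ for all $e\notin\Co(d)$, and of $\vec{\mu}^\top\vec{r}=\vec{\mu}^\top\vec{y}$ for every $\vec{r}\in R(s)$, and return any solution. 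By Lemma~\ref{lem:nullspclpc}, every solution of this system is a $\Co(d)$-pseudo-cut, so the correctness of the procedure reduces to showing that the system is feasible.

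For feasibility I would exhibit an explicit solution. Running the iteration in the proof sketch of Proposition~\ref{prop:findcut} from the singleton fibre $\{d\}$ yields a cut $c$ with $d\in c$ (the invariant $d\in d\then w$ is maintained because each iteration only enlarges the fibre). Since $c$ is a fibre over $s$ and $d\in c$, every $e\in c$ satisfies $\{d,e\}\subseteq c$ and hence $e\in\Co(d)$, i.e.\ $c\subseteq\Co(d)$; thus $\cvec{c}$ obeys the support and normalisation constraints of step~(iv). As $\cvec{c}$ is a cut vector it is a pseudo-cut, and being supported on $\Co(d)$ with $\cvec{c}_d=1$ it is even a $\Co(d)$-pseudo-cut; hence, by the ``only if'' direction of Lemma~\ref{lem:nullspclpc}, $\cvec{c}^\top\vec{r}=\cvec{c}^\top\vec{y}$ holds for all $\vec{r}\in R(s)$. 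Therefore $\cvec{c}$ solves the system of step~(iv), and any returned solution is a genuine $\Co(d)$-pseudo-cut.

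It remains to bound step~(iv). Because $\Co(d)$ and all $\vec{r}\in R(s)$ are fibres over $s$, the only real unknowns are $(\vec{\mu}_e)_{e\in\Co(d)\setminus\{d\}}$, at most $|Q_{D,s}|$ many, and the $|R(s)|\le|Q_{D,s}|$ equations take the form $\sum_{e\in\Co(d)\setminus\{d\}}\vec{\mu}_e(\vec{r}_e-\vec{y}_e)=\vec{y}_d-\vec{r}_d$; assembling the coefficients costs $O(|R(s)|\,|\Co(d)|)=O(|Q_{D,s}|^2)$ and solving costs $O(|Q_{D,s}|^\kappa)$. Since $D$ is a nonempty recurrent SCC, $d$ has an incoming edge in $B_{D,D}$, so $|E(s)|\ge1$; using $\kappa\le3$ and $|Q_{D,s}|\le|Q|$, both quantities are $\le|Q|^2|Q_{D,s}||E(s)|\le|Q|^2\sum_{t\in S}|Q_{D,t}||E(t)|$, hence absorbed into the step-(iii) term. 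Adding the four contributions gives the claimed bound. I expect the feasibility step to be the only one requiring care — in particular, checking that the cut produced as in Proposition~\ref{prop:findcut} both contains $d$ and is contained in $\Co(d)$; the eigenvector computation, the two black-box invocations, and the final small linear solve are routine.
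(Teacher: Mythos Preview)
Your proposal is correct and follows essentially the same four-step approach as the paper's proof: compute $\vec{y}$ in $O(|D|^\kappa)$, compute $\Co(d)$ via Lemma~\ref{lem:calccod}, compute $R(s)$ via Lemma~\ref{lem:calcR}, and solve the linear system from Lemma~\ref{lem:nullspclpc}. The only differences are that you add an explicit feasibility argument (the paper leaves this implicit, since it was observed earlier that a cut containing~$d$ exists and yields a $\Co(d)$-pseudo-cut) and you bound the final solve by $O(|Q_{D,s}|^\kappa)$ and absorb it into the step-(iii) term, whereas the paper simply treats it as an $O(|D|)$-sized system solved in $O(|D|^\kappa)$; both land in the stated bound.
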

Now our main result follows, which we restate here:
\thmPMCMCUBA*

\section{Discussion} \label{sec-discussion}

We have analysed two algorithms for computing normalisers: the cut-based one by Baier et al.~\cite{16BKKKMW-CAV,baieunp1}, and a new one, which draws from techniques by Protasov and Voynov~\cite{Protasov17} for the analysis of matrix semigroups.
The first approach is purely combinatorial, and in terms of the automaton, an efficient implementation runs in time $O(|Q|^3 |\delta| + |\delta|^2) = O(|Q|^3 |\delta|)$ (Proposition~\ref{prop:findcut}).

The second approach combines a linear-algebra component to compute~$R(s)$ with a combinatorial algorithm to compute the co-reachability set~$\Co(d)$.
In terms of the automaton, the linear-algebra component runs in time $O(|Q|^3)$ (Lemma~\ref{lem:calcR}), while the combinatorial part runs in time $O(|\delta|^2)$, leading to an overall runtime of $O(|Q|^3 + |\delta|^2)$.
Note that for all $r \in [1,2]$, if $|\delta| = \Theta(|Q|^r)$ then the second approach is faster by at least a factor of~$|Q|$.

Although it is not the main focus of this paper, we have analysed also the model-checking problem, where a non-trivial Markov chain is part of the input.
The purely combinatorial algorithm runs in time $O(|Q|^\kappa |S|^\kappa + |Q|^3 |\delta| |S|+|\delta|^2 |E|)$, and the linear-algebra based algorithm in time $O(|Q|^\kappa |S|^\kappa + |Q|^3 |E|+|\delta|^2 |E|)$.
There are cases in which the latter is asymptotically worse, but not if $\kappa = 3$ (i.e., solving linear systems in a normal way such as Gaussian elimination) or if $|E|$ is $O(|S|)$.

It is perhaps unsurprising that a factor of $|\delta|^2$ from the computation of~$\Co(d)$ occurs in the runtime, as it also occurs when one merely verifies the unambiguousness of the automaton, by searching the product of the automaton with itself.
Can the factor $|\delta|^2$ (which may be quartic in~$|Q|$) be avoided?


\bibliographystyle{plain}
\bibliography{ms}

\newpage
\appendix
\section{The Number of Transitions Can Be Quadratic.}
\label{app-A}
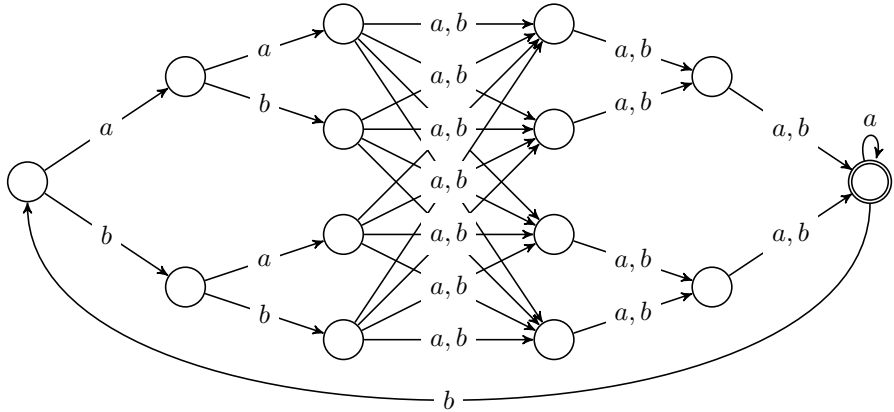
\begin{figure}[!h]
\begin{tikzpicture}[-{>[scale=5]},color=black,scale=0.7,semithick,,>=stealth']
	\node[circle,draw=black,minimum size=15pt] (1) at (0,0) {};
	\node[circle,draw=black,minimum size=15pt] (2) at (3,2) {};
	\node[circle,draw=black,minimum size=15pt] (3) at (3,-2) {};
	\node[circle,draw=black,minimum size=15pt] (4) at (6,3) {};
	\node[circle,draw=black,minimum size=15pt] (5) at (6,1) {};
	\node[circle,draw=black,minimum size=15pt] (6) at (6,-1) {};
	\node[circle,draw=black,minimum size=15pt] (7) at (6,-3) {};
	\node[circle,draw=black,minimum size=15pt] (8) at (10,3) {};
	\node[circle,draw=black,minimum size=15pt] (9) at (10,1) {};
	\node[circle,draw=black,minimum size=15pt] (10) at (10,-1) {};
	\node[circle,draw=black,minimum size=15pt] (11) at (10,-3) {};
	\node[circle,draw=black,minimum size=15pt] (12) at (13,2) {};
	\node[circle,draw=black,minimum size=15pt] (13) at (13,-2) {};
	\node[circle,draw=black,minimum size=15pt,accepting] (14) at (16,0) {};
	
	\draw (1)--(2) node[midway,fill=white] {$a$};
	\draw (1)--(3) node[midway,fill=white] {$b$};
	\draw (2)--(4) node[midway,fill=white] {$a$};
	\draw (2)--(5) node[midway,fill=white] {$b$};
	\draw (3)--(6) node[midway,fill=white] {$a$};
	\draw (3)--(7) node[midway,fill=white] {$b$};
	\draw (4)--(8) node[midway,fill=white] {$a,b$};
	\draw (4)--(9) node[midway,fill=white] {$a,b$};
	\draw (4)--(10) node[midway,fill=white] {$a,b$};
	\draw (4)--(11) node[midway,fill=white] {$a,b$};
	\draw (5)--(8) node[midway,fill=white] {$a,b$};
	\draw (5)--(9) node[midway,fill=white] {$a,b$};
	\draw (5)--(10) node[midway,fill=white] {$a,b$};
	\draw (5)--(11) node[midway,fill=white] {$a,b$};
	\draw (6)--(8) node[midway,fill=white] {$a,b$};
	\draw (6)--(9) node[midway,fill=white] {$a,b$};
	\draw (6)--(10) node[midway,fill=white] {$a,b$};
	\draw (6)--(11) node[midway,fill=white] {$a,b$};
	\draw (7)--(8) node[midway,fill=white] {$a,b$};
	\draw (7)--(9) node[midway,fill=white] {$a,b$};
	\draw (7)--(10) node[midway,fill=white] {$a,b$};
	\draw (7)--(11) node[midway,fill=white] {$a,b$};
	\draw (8)--(12) node[midway,fill=white] {$a$};
	\draw (9)--(12) node[midway,fill=white] {$b$};
	\draw (10)--(13) node[midway,fill=white] {$a$};
	\draw (11)--(13) node[midway,fill=white] {$b$};
	\draw (12)--(14) node[midway,fill=white] {$a$};
	\draw (13)--(14) node[midway,fill=white] {$b$};
	
	
	\draw (14) edge[out=270,in=270,bend angle=180,looseness=0.8] node[midway,fill=white] {$a,b$} (1);
\end{tikzpicture}
\caption{An unambiguous automaton with a single recurrent strongly connected SCC and a quadratic number of edges.}\label{fig:manyarrows}
\end{figure}

The example in Figure \ref{fig:manyarrows} shows a strongly connected UBA where the number of transitions is quadratic in $|Q|$ for a fixed alphabet $\Sigma = \{a,b\}$.
Indeed, for any $n \geq 2$ we can create a similar strongly connected UBA with $|Q| = O(2^{n})$ states and at least $(2^n)^2$ transitions.

Such a UBA can be created as follows: take two (directed) complete binary trees $T_1$ and~$T_2$ where each node except for the leaves has $2$ outgoing edges labelled by $a,b$, respectively.
In~$T_2$, flip the directions of all edges. 
From the former root of~$T_2$, add a transition, labelled by both $a$ and~$b$, to the root of~$T_1$.
From each leaf of $T_1$ add transitions to each former leaf of~$T_2$, all labelled by both $a$ and~$b$.
The resulting graph is a strongly connected UBA with a number of arrows that is quadratic in the number of states, similar to the one in Figure \ref{fig:manyarrows}.

\section{Missing Proofs}

\subsection{Proof From Section~\ref{sec:prelim}} \label{app:prelims}

We prove Lemma~\ref{lem:diamondfree} from the main body:
\lemdiamondfree*
\begin{proof}
Let $\cA = (Q,\Sigma,\delta,Q_0,F)$ be the given UBA.
First, we remove every state $q \in Q$ that is not reachable from $Q_0$, together with all its incoming and outgoing edges (i.e., transitions with their labels).
Breadth-first search starting from every $q_0 \in Q_0$ costs at most $O(|Q|^2 + |Q||\delta||\Sigma|)$ time overall, as $|\delta||\Sigma|$ is at least the number of edges in $\cA$.
Now we need to remove all the diamonds in the remaining automaton.

If there exists a diamond from $q$ to $s$, then, since $q$ is reachable, we see that $\cL(\cA_s) = \emptyset$ holds by unambiguousness, so we can remove $s$ (and all its incoming and outgoing edges) from the automaton without changing the language.
Consider the product automaton $\cA \times \cA$ with states $(Q \times Q) \cup \{r\}$ and alphabet $\Sigma \cup \{\$\}$ for a fresh letter~$\$$.
There exists a transition between $(q,q')$ and $(s,s')$ labelled by $a \in \Sigma$ if and only if there exist an $a$-transition between $q$ and~$s$ and an $a$-transition between $q'$ and~$s'$.
There also exists an $\$$-transition between $r$ and $(q,q)$ for every $q \in Q_0$.
The number of transitions in $\cA \times \cA$ is at most $|\delta|^2|\Sigma| + |Q|$, as any transition is either an $\$$-transition from $r$ to $q \in Q_0$, or it is represented by some $a \in \Sigma$ and pairs $(q,q')$ and $(s,s')$ such that there exist $a$-transitions from $q$ to~$s$ and from $q'$ to~$s'$.

Finally we perform a breadth-first search on $\cA \times \cA$, starting in~$r$, and remove $q$ from~$\cA$ every time we encounter an edge between $(s, s')$ and $(q,q)$ with $s \neq s'$.
Since we had removed all unreachable states in~$\cA$, this means that for any diamond from $q$ to $t$, the state~$(q,q)$ is reachable from~$r$, and traversing the diamond means that at some point we will encounter an edge from $(s,s')$ with $s \neq s'$ to~$(t',t')$ for some state~$t'$.
Removing~$t'$ eliminates the diamond without altering the language of~$\cA$.
Breadth-first search on $\cA \times \cA$ costs time $O(|\delta|^2|\Sigma|)$, concluding the proof.
\end{proof}

\subsection{Proofs From Section~\ref{sub:cuts}} \label{app:cuts}

We prove Lemma~\ref{lem:calccod} from the main body:
\lemcalccod*
\begin{proof}
Let $P = (V,A)$ denote the graph with vertices in $V = \{(p,p',s) \in Q \times Q \times S \mid (p,s), (p',s) \in D\}$ and an edge from $(p,p',s)$ to $(r,r',t)$ whenever there are edges from $\langle p,s \rangle$ to~$\langle r,t \rangle$ and from $\langle p',s \rangle$ to~$\langle r',t \rangle$ in~$B_{D,D}$.
We have $|V| \le |Q| |D|$ and $|A| \le |\delta| |T|$.
Define $\pi_S : V^* \rightarrow S^*$ as the function mapping $v_0 v_1\ldots v_n$ to $s_0 s_1\ldots s_n$, where $v_i = (p_i,p'_i,s_i)$.
Let $v_0 v_1 \ldots v_n$ be a path from $(q,q,s)$ to $(q,q',s)$ in $P$, $v_0 = (q,q,s)$ and $v_n = (q,q',s)$. Then $w = \pi_S(v_1\ldots v_n)$ is a sequence of states in $S$ such that $\{\langle q,s\rangle, \langle q',s\rangle\} \subseteq \langle q,s\rangle \then w$.
We can (breadth-first) search~$P$ to calculate $\Co(d)$ in $O(|V|+|A|)=O(|Q| |D| + |\delta| |T|)$.
Using the breadth-first search tree, we can save the shortest paths from $(q,q,s)$ to $(q,q',s)$ for each $e = \langle q', s\rangle \in \Co(d)$.
Since $\Co(d)$ is a fibre, we have $|\Co(d)| \le |Q|$.
Hence we can compute in time $O(|Q| |D| + |\delta| |T| + |Q| |V|) \le O(|Q|^2 |D| + |\delta| |T|)$ the list~$\CoPath(d)$ with $|\CoPath(d)(e)| \leq |Q| |D|$.
\end{proof}

We prove Proposition~\ref{prop:findcut} from the main body:
\propfindcut*
\begin{proof}
We continue from the proof sketch in the main body.
Step~1 of the algorithm computes $\Co(d)$ and $\CoPath(d)$ in time $O(|Q|^2 |D| + |\delta| |T|)$ using Lemma \ref{lem:calccod}.
We have $|\CoPath(d)(e)| \le |Q| |D|$, so the (final) word~$w$ has length at most $|Q|^2 |D|$.
For each state $v_{i-1}$ in~$w$ one can update the set~$\Survives$ in the inner loop in time $O(|\delta|)$
by going through the automaton transitions labelled by~$v_{i-1}$.
Hence step~3 of the algorithm takes time $O(|Q|^2 |\delta| |D|)$.
Step~4 is to calculate $d \then w$ by essentially the same computation as in the inner loop, but forwards instead of backwards and with fixed~$w$, hence also in time $O(|Q|^2 |\delta| |D|)$.
The total runtime is $O(|Q|^2 |\delta| |D| + |\delta| |T|)$.
\end{proof}

We prove Theorem~\ref{thm:PMC-MC-UBA-cut} from the main body:

\thmPMCMCUBAcut*
\begin{proof}
There are at most $|\delta| |E|$ edges in~$B$.
Thus, summing the quantities $|D|$ and~$|T|$ from Proposition~\ref{prop:findcut} over all SCCs~$D$ gives at most $|Q| |S|$ and $|\delta| |E|$, respectively.
Hence, by Proposition~\ref{prop:findcut}, one can compute normalisers for all recurrent SCCs in time $O(|Q|^3 |\delta| |S| + |\delta|^2 |E|)$.
Using Proposition~\ref{prop:mccomp} we get a total runtime of $O(|Q|^\kappa |S|^\kappa + |Q|^3 |\delta| |S|+|\delta|^2 |E|)$.
\end{proof}

\subsection{Proofs From Section~\ref{sub:pseudo-cuts}} \label{app:pseudocuts}

The following lemma is used in the proof of Lemma~\ref{lem:calcR}:
\begin{restatable}{ourlemma}{lemorderedlindep}\label{lem:orderedlindep}
Let $t \in S$ and $x \in S^*$. 
Let $U \subseteq S^*$ be a set of words with $u \ll_S x$ for all $u \in U$ and $\Delta'(t)\Delta(x)\vec{y} = \sum_{u \in U} \gamma_u \Delta'(t)\Delta(u)\vec{y}$ for some $\gamma_u \in \Real$,  $u \in U$.
Let $s \in S$ and $w \in S^*$. 
Then there exists a set $U'$ of words $u' \ll_S w t x$ such that $\Delta'(s)\Delta(w t x)\vec{y} = \sum_{u' \in U'} \gamma_{u'} \Delta'(s)\Delta(u')\vec{y}$ holds for some $\gamma_{u'} \in \Real$, $u' \in U'$.
\end{restatable}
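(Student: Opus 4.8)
The plan is to prepend the block $w t$ to the given linear dependence by exploiting the absorption identity $\Delta(s)\Delta'(s)=\Delta(s)$ noted just before the statement. First I would record the consequence that $\Delta(w t)=\Delta(w t)\Delta'(t)$ for every $w\in S^*$, since $\Delta(wt)=\Delta(w)\Delta(t)=\Delta(w)\bigl(\Delta(t)\Delta'(t)\bigr)=\Delta(wt)\Delta'(t)$. Combining this with the homomorphism property $\Delta(w t x)=\Delta(w t)\Delta(x)$ gives
\[
\Delta'(s)\Delta(w t x)\vec{y}\;=\;\Delta'(s)\Delta(w t)\,\Delta'(t)\,\Delta(x)\vec{y}\;=\;\Delta'(s)\Delta(w t)\bigl(\Delta'(t)\Delta(x)\vec{y}\bigr).
\]

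Next I would substitute the hypothesis $\Delta'(t)\Delta(x)\vec{y}=\sum_{u\in U}\gamma_u\,\Delta'(t)\Delta(u)\vec{y}$ into the bracket, pull the scalars $\gamma_u$ out of the linear left factor, and re-absorb the intermediate $\Delta'(t)$ using $\Delta(w t)\Delta'(t)=\Delta(w t)$ a second time. This yields
\[
\Delta'(s)\Delta(w t x)\vec{y}\;=\;\sum_{u\in U}\gamma_u\,\Delta'(s)\Delta(w t)\Delta'(t)\Delta(u)\vec{y}\;=\;\sum_{u\in U}\gamma_u\,\Delta'(s)\Delta(w t u)\vec{y},
\]
so the natural candidate is $U':=\{\,w t u \mid u\in U\,\}$ with $\gamma_{wtu}:=\gamma_u$ (the map $u\mapsto w t u$ is injective, so the coefficients are well defined).

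It then remains to verify $w t u \ll_S w t x$ for every $u\in U$. Since $u\ll_S x$, either $|u|<|x|$, in which case $|w t u|<|w t x|$ and clause~(1) of the definition of $\ll_S$ applies; or $|u|=|x|$, in which case $u\neq x$ (strictness of $\ll_S$) and $|w t u|=|w t x|$, so I need that prepending a common block preserves $\ll_S$ among words of equal length. This follows by a short induction on $|w t|$ from clauses~(2)--(3): comparing $w t u$ with $w t x$ reads the trailing blocks $u,x$ first, these have equal length, and $u\ll_S x$, so the comparison is decided exactly as for $u$ versus $x$. Finally, in the degenerate case $U=\emptyset$ the hypothesis gives $\Delta'(t)\Delta(x)\vec{y}=\vec{0}$, whence the first display yields $\Delta'(s)\Delta(w t x)\vec{y}=\vec{0}$ and $U'=\emptyset$ works.

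I do not anticipate a serious obstacle: the argument is essentially a two-sided application of the $\Delta'$-absorption identity plus bookkeeping. The only point requiring genuine care is the order-preservation claim for $\ll_S$ — one must use that $\ll_S$ reads words from right to left, so that the freshly prepended block $w t$ becomes a common \emph{suffix} once the words are reversed and therefore does not influence the lexicographic comparison; this is precisely why $\ll_S$ was defined as a right-to-left shortlex order rather than the ordinary one.
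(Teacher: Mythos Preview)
Your proposal is correct and follows essentially the same route as the paper: insert $\Delta'(t)$ via the absorption identity, substitute the hypothesis, re-absorb, and take $U'=\{w t u\mid u\in U\}$. You actually supply more detail than the paper does---the verification that $w t u \ll_S w t x$ and the degenerate case $U=\emptyset$ are left implicit there---so nothing further is needed.
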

\begin{proof}
We have:
\begin{align*}
\Delta'(s)\Delta(w t x)\vec{y}
& \ = \ \Delta'(s) \Delta(w) \Delta(t) \Delta(x)\vec{y}  \\
& \ = \ \Delta'(s) \Delta(w) \Delta(t) \Delta'(t) \Delta(x)\vec{y} \\
& \ = \ \Delta'(s) \Delta(w) \Delta(t)\Big(\sum_{u \in U} \gamma_u \Delta'(t)\Delta(u)\vec{y}\Big) \\
& \ = \ \sum_{u \in U}\gamma_u\Delta'(s)\Delta(w t u)\vec{y}\;.
\end{align*}
Hence choose $U' = \{w t u \mid u \in U\}$.
\end{proof}

Now we can prove Lemma~\ref{lem:calcR} from the main body:
\lemcalcR*
\begin{proof}
We use induction to prove that for each~$s$ the span of~$R(s)$ contains~$V(s)$.
Denote all nonempty words in $S^*$ by $v^{(1)},v^{(2)},\ldots$ such that $v^{(i)} \ll_S v^{(j)}$ if and only if $i < j$.
For each~$v^{(i)}$, let $s^{(i)} \in S$ and $w^{(i)} \in S^*$ be such that $v^{(i)} = s^{(i)} w^{(i)}$.
For each $v^{(i)}$ such that $w^{(i)}$ is the empty word, $\Delta'(s^{(i)})\vec{y}$ is linearly dependent on $R(s^{(i)})$.
Our induction hypothesis is that for each $v^{(i)}$ for $i \leq n$, we have that $\Delta'(s^{(i)})\Delta(w^{(i)})\vec{y}$ is dependent on $R(s^{(i)})$.
For the inductive step, consider $v^{(n+1)}$ and distinguish two cases:
\begin{itemize}
	\item  Either $\Delta'(s^{(n+1)})\Delta(w^{(n+1)})\vec{y}$ was visited by the algorithm: \\
 Then either $\Delta'(s^{(n+1)})\Delta(w^{(n+1)})\vec{y}$ was shown to be dependent on $R(s^{(n+1)})$ at that time, or $\Delta'(s^{(n+1)})\Delta(w^{(n+1)})\vec{y}$ was independent of $R(s^{(n+1)})$ at that time, after which $\Delta'(s^{(n+1)})\Delta(w^{(n+1)})\vec{y}$ was added to $R(s^{(n+1)})$.
	\item Or $\Delta'(s^{(n+1)})\Delta(w^{(n+1)})\vec{y}$ was not visited by the algorithm: \\
 Then for some suffix $v^{(q)}$ of $v^{(n+1)}$ the vector $\Delta'(s^{(q)})\Delta(w^{(q)})\vec{y}$ was visited by the algorithm and was found to be dependent on $R(s^{(q)})$ at the time, which was dependent on vectors of the form $\Delta'(s^{(q)})\Delta(w^{(p)})\vec{y}$, with $w^{(p)} \ll_S w^{(q)}$.
 By Lemma \ref{lem:orderedlindep} the vector $\Delta'(s^{(n+1)})\Delta(w^{(n+1)})\vec{y}$ is dependent on vectors of the form $\Delta'(s^{(n+1)})\Delta(w^{(m)})\vec{y}$ with $w^{(m)} \ll_S w^{(n+1)}$.
 Since $s^{(n+1)}w^{(m)} \ll_S v^{(n+1)}$, by the induction hypothesis all of these vectors are dependent on $R(s^{(n+1)})$. Therefore, $\Delta'(s^{(n+1)})\Delta(w^{(n+1)})\vec{y}$ is dependent on $R(s^{(n+1)})$.
\end{itemize}

Concerning runtime, for every vector $\vec{u} = \Delta'(t)\Delta(w)\vec{y}$ visited by the algorithm, $\vec{u}$ is orthogonalised against~$R(t)$.
This costs $O(|Q|^2)$ time, as $\vec{u}$ and every vector in~$R(t)$ is a fibre over~$t$.
Moreover, one can calculate~$\Delta'(s) \Delta(t) \vec{u}$ from~$\vec{u}$ in time $O(|Q|^2)$, as $\Delta'(s) \Delta(t)$ is possibly nonzero only in a $|Q| \times |Q|$ submatrix.
The algorithm adds $|E(t)|$ vectors to $\unvisited$ each time a vector gets added to $R(t)$, and the latter happens at most $|Q_{D,t}|$ times since the nonzero elements only occur in $\langle q, t\rangle$ with $\langle q, t\rangle \in D$.
This gives us the bound of $O(|Q|^2 \sum_{t \in S} |Q_{D,t}||E(t)|)$.
\end{proof}

We prove Proposition~\ref{prop:findcodcut} from the main body:

\propfindcodcut*

\begin{proof}
One can compute~$\vec{y}$ in time $O(|D|^\kappa)$.
By Lemma \ref{lem:calccod}, one can calculate~$\Co(d)$ in time 
{$O(|Q||D|+|\delta||T_D|)$}.
By Lemma~\ref{lem:calcR}, one can compute~$R(s)$ in time $O(|Q|^2 \sum_{t \in S}|Q_{D,t}| |E(t)|)$.
The $O(|D|)$ equations 
from Lemma~\ref{lem:nullspclpc} can be solved in time $O(|D|^\kappa)$.
The total runtime is $O(|D|^\kappa + |Q||D|+|\delta||T_D| + |Q|^2 \sum_{t \in S}|Q_{D,t}| |E(t)|)$.
\end{proof}

We prove Theorem~\ref{thm:PMC-MC-UBA} from the main body:

\thmPMCMCUBA*
\begin{proof}
Denote by $\mathcal{D}$ the set of accepting recurrent SCCs.
Using Proposition~\ref{prop:findcodcut} we compute a normaliser for each of them.
Since there are at most $|\delta||E|$ edges in $B$, we have $\sum_{D \in \mathcal{D}} |T_D| \le |\delta| |E|$.
Hence, $\sum_{D \in \mathcal{D}} |D|^\kappa + |Q| |D| + |\delta| |T_D|$ is $O(|Q|^\kappa |S|^\kappa + \delta^2 |E|)$.
For any $t \in S$ and different SCCs $D,D' \in \mathcal{D}$, the sets $Q_{D,t}$ and $Q_{D',t}$ are disjoint.
Thus,
\[
\sum_{t \in S} \sum_{D \in \mathcal{D}} |Q_{D,t}| |E(t)| \ \le \ \sum_{t \in S} |Q| |E(t)| \ = \ |Q| |E|\,.
\]
Hence by Proposition~\ref{prop:findcodcut}, one can compute normalisers for all accepting recurrent SCCs in time $O(|Q|^\kappa |S|^\kappa + |Q|^3 |E|+|\delta|^2 |E|)$.
Using Proposition~\ref{prop:mccomp} we get a total runtime of $O(|Q|^\kappa |S|^\kappa + |Q|^3 |E|+|\delta|^2 |E|)$.
\end{proof}

\end{document}